\pgfplotsset{compat=1.18}
\let\oldnl\nl
\newcommand{\nonl}{\renewcommand{\nl}{\let\nl\oldnl}}
\newcommand\numberthis{\addtocounter{equation}{1}\tag{\theequation}}
\def\tmp#1#2#3{%
  \definecolor{Hy#1color}{#2}{#3}%
  \hypersetup{#1color=Hy#1color}}
\def\tmp#1#2{%
  \colorlet{Hy#1bordercolor}{Hy#1color#2}%
  \hypersetup{colorlinks, #1bordercolor=Hy#1bordercolor}}
\newtheorem{theorem}{Theorem}[section]
\newtheorem{lemma}[theorem]{Lemma}
\newtheorem{corollary}[theorem]{Corollary}
\theoremstyle{definition}
\newtheorem{definition}[theorem]{Definition}
\newtheorem{claim}[theorem]{Claim}%
\renewcommand{\vec}[1]{\boldsymbol{#1}}
\DeclareMathOperator*{\argmax}{arg\,max}
\newcommand{\Instance}{\ensuremath{\mathcal{I}}}
\newcommand{\InstanceProphet}{\ensuremath{\mathcal{I}_{\textsc{prophet}}}}
\newcommand{\InstanceConstraint}{\ensuremath{\mathcal{\hat{I}}
}}
\newcommand{\EX}{\ensuremath{\mathbb{E}}}
\newcommand{\FixedCostStrategy}{\ensuremath{\pi_{\textsc{fixed}}}\xspace}
\newcommand{\MainStrategy}{\ensuremath{\pi_{\textsc{main}}}\xspace}
\newcommand{\InstanceInterval}
{\ensuremath{\mathcal{I}_{\textsc{sl}}}}
\newenvironment{strategy}[1][htb]{%
   \begin{algorithm}[#1]%
  }{\end{algorithm}}
\title{Pandora's Box Problem With Time Constraints}
\author[1,2]{Georgios Amanatidis}
\author[3]{Ben Berger}
\author[4]{Tomer Ezra}
\author[5]{Michal Feldman}
\author[6]{\\Federico Fusco}
\author[7]{Rebecca Reiffenh{\"a}user}
\author[8]{Artem Tsikiridis}
\affil[1]{Athens University of Economics and Business, Athens, Greece}
\affil[2]{Archimedes, Athena Research Center, Athens, Greece}
\affil[3]{Offchain Labs, Inc., Clifton, NJ, USA}
\affil[4]{Harvard University, Cambridge, MA, USA}
\affil[5]{Tel Aviv University, Tel Aviv, Israel}
\affil[6]{Sapienza University of Rome, Rome, Italy}
\affil[7]{University of Amsterdam, Amsterdam, The Netherlands}
\affil[8]{Technical University of Munich, Munich, Germany}
\date{} 
\begin{document}

\maketitle
\begin{abstract}
\noindent  The Pandora's Box problem models the search for the best alternative when evaluation is costly. In the simplest variant, a decision maker is presented with $n$ boxes, each associated with a cost of inspection and a hidden random reward. The decision maker inspects a subset of these boxes one after the other, in a possibly adaptive order, and gains the difference between the largest revealed reward and the sum of the inspection costs. 
    Although this classic version is well understood (Weitzman 1979), there is a flourishing recent literature on variants of the problem. Here we introduce a general framework---the Pandora's Box Over Time problem---that captures a wide range of variants where time plays a role, e.g., by constraining the schedules of exploration and influencing costs and rewards.
    In our framework, boxes have time-dependent rewards and costs, whereas inspection may require a box-specific processing time. Moreover, once a box is inspected, its reward may deteriorate over time. Our main result is an efficient constant-factor approximation to the optimal strategy for the Pandora's Box Over Time problem, which is generally NP-hard to compute. We further obtain improved results for the natural special cases where boxes have no processing time, boxes are available only in specific time slots, or when costs and reward distributions are time-independent (but rewards may still deteriorate after inspection). 
\end{abstract}

\section{Introduction}\label{sec:intro}

    In the classic version of the Pandora's Box problem, introduced in the seminal paper by \citet{Weitzman79}, a decision maker is presented with $n$ boxes to explore, each characterized by an inspection cost and a hidden random reward. 
    A (possibly adaptive) strategy consists of sequentially inspecting the boxes and then picking the largest observed reward, for a net utility given by the largest reward observed minus the sum of the incurred inspection costs.  Surprisingly, Weitzman showed that this complex problem can be solved efficiently via a simple greedy procedure. 
    
    The Pandora's Box problem naturally models situations like hiring, buying a house, or choosing a school: the decision maker has some prior information on each alternative, but only a costly inspection (e.g., interviewing a candidate, visiting a house, or a school) reveals their actual values. Motivated by these applications, in recent years, many variants of the original problem have been considered (we refer the interested reader to the survey by \citet{BeyhaghiCai23}, and the Related Work). In this work, we study the impact of time on the problem by proposing a unified framework that captures and generalizes existing models. Notably, our approach addresses open directions raised in seminal works. For example, \citet{Weitzman79} explicitly calls to study variants with a ``binding time horizon'' and the work of \citet{Olszewski15} highlights the need to ``address a version of Weitzman’s problem in which the prizes (offers) do not remain permanently available.''
    
    Time can impact in many different ways, especially given the sequential nature of the Pandora's Box problem. For instance, a company may have a strict deadline for hiring a new worker, so only a limited number of candidates can be interviewed. Moreover, good candidates may only remain on the market for a short period or expect a definite answer within a certain amount of time after the interview. 
    Conversely, in the housing market, it is common that houses that remain too long on the market decrease their price (i.e., their cost) so that the corresponding reward increases, or that some days are more convenient to arrange a visit (thus reducing the cost of inspection).
    As a further example, when picking a yearly gym membership or sampling perishable goods like food, selecting an option only some time after testing it may reduce the leftover value one can extract. 
    
We introduce a general framework, the \textit{Pandora's Box Over Time} problem, which captures these phenomena. Here, each box is associated with a processing time,  time-dependent inspection cost, and random reward, and with a discounting function that specifies how the value of the realized reward deteriorates in the interval of time between inspection and selection. 
Our model captures all of these aspects, and, somewhat surprisingly, it still allows for good approximating strategies, despite its generality. However,
it is clear from the above examples that many applications do not exhibit all types of time dependency at once. Therefore, we also focus on natural restrictions, for which we show that approximation guarantees can be significantly improved.

\smallskip
\noindent\textbf{Our Contribution.}
        Beyond providing a general framework for Pandora's Box Over Time problem, we provide the following main result: 
        \begin{itemize}
            \item We show how to efficiently compute a $21.3$-approximation to the optimal strategy for Pandora's Box Over Time (\Cref{thm:general_result}). As our model captures the Free-Order Prophet Inequality problem \citep{AgrawalSZ20}, we also observe that finding the optimal solution is generally NP-hard.
        \end{itemize}

        We further investigate three special cases of independent interest, where improved approximation factors can be obtained.
        \begin{itemize}
            \item When processing times are all zero (\textit{Pandora's Box Over Time With Instant Inspection problem}), but all other parameters may vary, we obtain a $(4+\varepsilon)$-approximation (\Cref{theorem:instant-proc-main}).
            \item To illustrate the versatility of our framework, we study the \textit{Pandora's Box With Time Slots} problem, for which we get an improved $(4+\varepsilon)$-approximation (\Cref{theorem:time-slots-main}). Here, the boxes have time-invariant cost and value, and no processing time, but can only be explored in box-dependent time intervals. Even though our model \textit{does not have ``hardwired'' feasibility constraints} on the exploration, its generality allows us to capture them easily. 
            \item Finally, we consider the situation where values of inspected boxes may degrade over time, but both the costs and the distributions of the rewards are not time-dependent (\textit{Pandora's Box With Value Discounting} problem). We obtain a $1.37$-approximation for this variant (\Cref{theorem:discounting-main}).
        \end{itemize}

\smallskip
\noindent    Our framework is general enough to capture, and provide results for, several models already studied in the literature.
    \begin{itemize}
        \item The original paper introducing the optimal solution to the Pandora's Box problem by \citet{Weitzman79} also encompasses a specific discounting factor for costs and rewards, as well as a possibly non-uniform processing time. These are captured as special cases of time-dependent costs and rewards, as well as value discounting functions (which regulate the deterioration of inspected rewards) in our general model.
        \item In the \textit{Pandora's Box With Commitment} problem (introduced as an open problem for the first time in the extended version of \citet{Olszewski15} and then studied by \citet{FuLX18}, and \citet{Segev021}), boxes can be inspected in any order, but the reward can only be collected immediately upon inspection. We capture this model with a value discounting function that drops to zero after the processing time has elapsed, so that a box retains value only at the round it is inspected. 
      
        \item Our model also captures ``offline'' constraints that do not seem time-related at first, such as combinatorial constraints on the boxes that can be inspected \citep{Singla18}. Using processing times and time-varying cost functions, we capture both cardinality and knapsack constraints. Note that the idea of using value discounting with respect to time, thus allowing the flexibility to address these types of constraints, was already suggested by \citet{Weitzman79}, albeit in a more restricted way.
        \item Finally, our model strictly generalizes the \textit{Online Pandora's Box} problem \citep{EsfandiariHLM19}, where the order in which the boxes must be processed is fixed in advance. This can be achieved in our model by setting the cost function to a sufficiently large value for all time steps except the one proposed by the online ordering.
    \end{itemize}

\smallskip
\noindent\textbf{Technical Challenges and Techniques.}
    The crucial difficulty at the heart of the Pandora's Box problem lies in finding the right balance between exploration and exploitation: inspecting new boxes may improve the reward, but it is more costly. This trade-off is solved by \citet{Weitzman79} via the notion of \textit{reservation value}, an index of the intrinsic value of each box. However, such a technical tool is not robust to modifications in the model, as the introduction of time-varying parameters implies that the importance of a box crucially depends on the time step at which it is inspected and, possibly, chosen. 
    
    The first step of our approach consists of creating a copy for each box $b_i$ for each time step $t$, so that box $b_i^t$ is a proxy for the strategy inspecting box $i$ at time $t$. This procedure allows us to associate a fixed cost and a random reward with each box. Still, it creates three problems: (i) we want to avoid that two proxies of the same boxes are inspected together, (ii) we want to force the strategy to inspect a box $b_i^t$ exactly at time $t$, and (iii) we want to make sure that each inspected box has enough time to be processed. 
    
    As a second step in our construction, we impose compatibility constraints on the boxes our strategy inspects to enforce (i) to (iii). 
    We do so by combining a submodular maximization routine with an adaptivity gap result \citep{BradacSZ19}, as in the work of \citet{Singla18}. In particular, the constraints induced by our model are captured by matching constraints in a block bipartite hypergraph, for which we derive an approximation algorithm. Matching constraints naturally capture the notion of complementarity between boxes (i), while the block bipartite structure ensures that enough time is associated with each box (ii)-(iii). From a technical point of view, we provide the first constant factor approximation algorithm for submodular maximization with Block Bipartite Hypergraphs (\emph{Submodular Block Matching}), which is based on contention resolution schemes \citep{Feldman13}.
    
    Finally, our last step exploits a reduction to a prophet inequality (see also \citet{EsfandiariHLM19}) to find a good stopping rule. Notably, since we are reducing to a problem where options are immediately discarded or accepted, we can ignore the deterioration of the reward in inspected boxes.  

    For the Pandora's Box With Time Slots and the Pandora's Box Over Time With Instant Inspection problems, we exploit the simplified structure of the hypergraph constraint (which collapses to simpler constraints) to achieve a better approximation result. Finally, for the Pandora's Box With Value Discounting problem, we exploit a reduction to a free-order prophet inequality, which directly provides a better approximation guarantee \citep{hill83,BubnaC23}.

\medskip
\noindent\textbf{Further Related Work.}
\label{sec:related}
    Many other versions of Pandora's Box problem have been investigated in recent years, including problems with non-obligatory inspection \citep{Doval18,BeyhaghiK19,BeyhaghiC22,FuJD22}, with interdependent valuations \citep{ChawlaGTTZ20,ChawlaGMT21}, combinatorial costs \citep{BergerEFF23}, precedence constraints \citep{BoodaghiansFLL20}, matching \citep{BowersW24}, and contexts \citep{AtsidakouCGPT24}.
    The Pandora's Box problem has also been investigated from a contract design perspective \citep{ezra2024sequential,HoeferSS25}, and in the learning setting \citep{Guo0T021,GergatsouliT22,GatmiryKSW24,HeuserK25}.
    
\medskip
\noindent\textbf{Comparison with Conference Versions.} {This work merges and extends the conference versions of the works of \citet{BergerEFF24} and \citet{AmanatidisFRT24}. \citet{BergerEFF24} introduced the Pandora's Box With Time Slots problem, discussed in Section~\ref{subsec:timeslots}, while \citet{AmanatidisFRT24} presented the more general Pandora's Box Over Time problem (see Section \ref{sec:general}), including two special cases. In Section~\ref{subsec:timeslots}, we first show that the Pandora's Box With Time Slots problem is, essentially, equivalent to a special case of the Pandora's Box  Over Time With Instant Inspection problem (also studied in \citet{AmanatidisFRT24} and presented in Section~\ref{subsec:instant} of this work). Moreover, we obtain an improved approximation guarantee of $(4+\varepsilon)$ for the Pandora's Box Over Time With Instant Inspection problem, which, naturally, also applies to the version with Time Slots. This result improves upon both the $(8+\varepsilon)$-approximation of \citet{AmanatidisFRT24} for the variant with instant inspections and the $\nicefrac{4e}{(e-1)}\approx 6.3$-approximation of \citet{BergerEFF24} for the variant with time slots. For more details, we refer the reader to Sections~\ref{subsec:instant} and~\ref{subsec:timeslots}.

\section{Model and Preliminaries}\label{sec:prelims}

We study a generalization of the Pandora's Box problem, in which time plays a crucial role in the decision-making process. We call this variant the \emph{Pandora's Box Over Time} problem, or simply {Pandora's Box Over Time}. There is a set of boxes $[n] = \left\{1, \dots, n\right\}$. A \emph{strategy} $\pi$ inspects boxes sequentially and (possibly) adaptively. To be more precise, at each round $t$ (starting from $t=1$), the strategy $\pi$ can take the following actions: (a) inspect a box $i \in [n]$ that has not been considered before (if possible) and proceed to round $t+1$, (b) stay idle\footnote{One way a strategy $\pi$ could simulate option (b) using only actions of the classic Pandora's Box problem is by inspecting a ``dummy'' box that can be processed instantly and has a deterministic reward and cost of $0$. It is without loss of generality to assume that such boxes are readily available.} at round $t$ and proceed to round $t+1$, or (c) \emph{halt}. Given a strategy $\pi$, we denote by $T_{\pi}$ the random round it halts and by $S(\pi)$ the random tuple of boxes that have been inspected by round $T_{\pi}$, ordered by their respective time of inspection.

Unlike the classic Pandora's Box problem, where each box $i \in [n]$ is associated with a scalar cost, here $i$ is associated with a \emph{cost function} $\bar{c}_i: \mathbb{Z}_{>0} \to \mathbb{R}_{\geq 0}$. This extension allows us to model the changing cost of a box based on the time it is inspected. In general, we do not impose any assumptions on the form of the cost functions. Additionally, each box has a \emph{processing time} $p_i \in \mathbb{N}$, representing the number of rounds a strategy must wait before inspecting another box. Specifically, if a strategy $\pi$ chooses to inspect box $i \in [n]$ at round $t$ (in the sense that the inspection begins at round $t$), then $\pi$ may inspect the next box starting from round $t' = t + 1 + p_i$.

Our model also captures scenarios in which both the value sampled from each inspected box and the final claimed value depend on time. Specifically, the reward of each box $i \in [n]$, the inspection of which begins at time $t$, is drawn from a probability distribution $D_{it}$. We denote by $V_{it}$ the random variable representing the reward drawn from $D_{it}$. We assume that all distributions $D_{it}$ for $i \in [n]$ and $t \in \mathbb{Z}_{>0}$ are independent. 
Additionally, the reward in each box $i \in [n]$ may degrade over time from the moment 
its inspection is completed to the moment the box is (potentially) chosen. Formally, each box is associated with a \emph{value discounting} function $\bar{v}_i : \mathbb{R}_{\geq 0} \times \mathbb{N} \to \mathbb{R}_{\geq 0}$, known to the decision maker. Consider a strategy $\pi$ that halts at time $T_{\pi}$. 
Suppose $\pi$ inspects box $i \in [n]$ at time $t \leq T_{\pi}$ and samples reward $V_{it}$ from $D_{it}$. If the strategy chooses to collect the reward from box $i$ after $\tau=T_{\pi}-t -p_i$ rounds, it gets a reward of $\bar{v}_i(V_{it}, \tau) \in \mathbb{R}_{\geq 0}$ rather than $V_{it}$. Note that we assume $\bar{v}_i(V_{it}, \tau) \leq \bar{v}_i(V_{it}, 0) = V_{it}$ for all $\tau \in \mathbb{N}$. For instance, $\bar{v}_i$ could be non-increasing in the variable measuring the time passed since inspection, although our model captures more complex behaviors.


An instance of  Pandora's Box Over Time is \(
    \Instance = \big(\bar{c}_i, p_i, (V_{it})_{t \in [H]}, \bar{v}_i\big)_{i \in [n]},
\) where $H$ (which is at least  $n + \sum_{i=1}^n p_i$) denotes the \emph{time horizon} of the instance. That is, there are $nH$ reward probability distributions to which access is given as part of the input, and after round $H$ all the rewards are assumed to be $0$. 
Given a strategy $\pi$ for  $\mathcal{I}$, we use $t_i(\pi) \in \{1, \dots, T_{\pi}-p_i\}$ to denote the random round at which strategy $\pi$ starts inspecting box $i \in S(\pi)$.\footnote{Here we implicitly exclude any strategy that halts while a box is being inspected. It is easy to see that these are irrelevant indeed, as any such strategy is dominated by the strategy that differs only in that it avoids opening the very last box whenever its inspection is not to be completed.} Furthermore, we define the random utility of the decision maker for strategy $\pi$ as
\begin{equation}\label{eq:utility}
    u_{\Instance}(\pi) := \max_{i \in S(\pi)} \bar{v}_i\left(V_{it_i(\pi)}, T_{\pi} - t_i(\pi)-p_i\right) - \sum_{i \in S(\pi)} \bar{c}_i\left(t_i(\pi)\right)
\end{equation}
i.e., we assume that when a strategy halts, \textit{the best available reward is always collected} at time $T_{\pi}$.
We use $\pi^*$ to define an optimal strategy, i.e., $\pi^* \in \argmax_{\pi} \EX
[u_{\Instance}(\pi)]$, and we say that a strategy $\pi$ is an $\alpha$-approximation of an optimal strategy if $\alpha \cdot \EX
[u_{\Instance}(\pi)] \geq \EX
[u_{\Instance}(\pi^*)]$, for $\alpha \geq 1$.

Finally, it is not hard to observe that both the classic Pandora's Box problem and its version with Commitment (where, recall, only the last reward can be collected) are special cases of our problem. To see this, consider an instance $\left(c_i, V_i\right)_{i \in [n]}$ of Pandora's Box. We can construct an instance $\Instance$ of Pandora's Box Over Time such that for each $i \in [n]$, $p_i = 0$, $\bar{c}_i(t) = c_i$, and $V_{it}=V_i$ for all $t \in \mathbb{Z}_{>0}$. Then, by setting $\bar{v}_i(V_i, t) = V_i$ for all $t \in \mathbb{N}$ (respectively, $\bar{v}_i(V_i, 0) = V_i$ and $\bar{v}_i(\cdot, t) = 0$ for all $t > 0$), the utility obtained by a strategy $\pi$ for $\left(c_i, V_i\right)_{i \in [n]}$ in Pandora's Box (respectively, Pandora's Box With Commitment) problem coincides with \eqref{eq:utility} for $(\Instance, \pi)$.

\subsection{A Class of Related Stochastic Optimization Problems}\label{sec:constrained}

In this section, we present a related class of Pandora's Box problems with added constraints on the sequences of inspected boxes. We call this variant the \emph{Constrained Pandora's Box} problem or just {Constrained Pandora's Box}. This class of stochastic optimization problems, ever since being proposed by \citet{Singla18}, has sparked a rich line of work (see Further Related Work in \Cref{sec:related}). Our purpose is to relate Pandora's Box Over Time to instances of Constrained Pandora's Box, to leverage known results.
In this scenario, as in Pandora's Box, there is a set $[n] = \{1, \dots, n\}$ of boxes, with each box $i \in [n]$ containing a random variable $V_i$ drawn from a publicly known, non-negative distribution $D_i$. The distributions $D_1, \dots, D_n$ are independent. 
Moreover, each box $i \in [n]$ is associated with a known cost $c_i \geq 0$. As in our setting, a strategy $\pi$ at each round $t$ (starting from $t=1$) may choose to inspect a box $i \in [n]$ (or stay idle / inspect a dummy box) and proceed to round $t+1$, or halt. 
Crucially, not all uninspected boxes can be inspected at a given time $t$. In Constrained Pandora's Box, there is a predetermined collection of \emph{feasible sequences} of inspected boxes, denoted by $\mathcal{F}$, and we say that a strategy is \emph{feasible} if it adheres to $\mathcal{F}$. 
In other words, for any feasible strategy $\pi$, and for any random ordered tuple $(i_1, \dots, i_k)$ that can be generated by $\pi$ (i.e., is a subsequence of $S(\pi)$ for some $k \in [n]$), it holds that $(i_1, \dots, i_k) \in \mathcal{F}$. We say that $\mathcal{F}$ is \emph{prefix-closed} if for any tuple $(i_1, \dots, i_k) \in \mathcal{F}$ and any $j \in [k]$, it holds that $(i_1, \dots, i_j) \in \mathcal{F}$.
We denote an instance of this problem as $\InstanceConstraint = \left((c_i, V_i)_{i \in [n]}, \mathcal{F}\right)$; when $\mathcal{F}$ contains any possible sequence, i.e., when the problem is the unconstrained standard Pandora's Box, we just write $\InstanceConstraint = \left((c_i, V_i)_{i \in [n]}\right)$ instead. Let $\pi$ be a feasible strategy for $\InstanceConstraint$. For each $i \in [n]$, let $I_i(\pi)$ and $A_i(\pi)$ be the indicator random variables that signify whether box $i$ is inspected by $\pi$ and whether the reward of box $i$ is collected, respectively. Note that $A_i(\pi) \leq I_i(\pi)$ always holds because a box must be inspected by $\pi$ before its reward is collected. Since $\pi$ is a feasible strategy, the indicator random variables respect the exploration constraints imposed by $\mathcal{F}$. Finally, $A_i(\pi) = 1$ holds for the box with the maximum reward observed among those in $S(\pi)$. Therefore, we can write the random utility of this strategy $\pi$ as:
\begin{equation}\label{eq:utility-constrained}
u_{\InstanceConstraint}(\pi) := \sum_{i=1}^n A_i(\pi) V_i - \sum_{i=1}^n I_i(\pi) c_i = \max_{i \in S(\pi)} V_i - \sum_{i \in S(\pi)} c_i \,.
\end{equation}
\smallskip

\noindent\textbf{Characterization of Optimal Strategies via Reservation Values.} The work of \citet{KleinbergWW16} revived interest in the Pandora's Box problem and drew the attention of the economics and computation community towards it. Among other contributions, they provided a new proof of the optimality of Weitzman's rule for Pandora's Box. Later, \citet{Singla18} applied this proof to problems with constraints and different optimization objectives.

Let $\InstanceConstraint$ be an instance of Constrained Pandora's Box. We define the \emph{reservation value} $r_i$ of each box $i \in [n]$ to be the (unique) solution to the equation $\mathbb{E}_{V_i \sim D_i}[(V_i - r_i)^+] = c_i$ (where $(x)^+$ is a shortcut for $\max(x, 0)$). Moreover, for each box $i \in [n]$, let $Y_i := \min(V_i, r_i)$. 
We now state a result of \citet{KleinbergWW16} and provide its proof (which is essentially the original proof adjusted to our notation), in Appendix \ref{app:missing} for completeness. 
The statement refers to strategies for a class of stochastic optimization problems that generalizes the Pandora's Box problem in the sense that there still are $n$ boxes with costs and random rewards, $(c_i, V_i)_{i \in [n]}$, but the sequence of allowed inspections may be limited (e.g., by combinatorial constraints like in Constrained Pandora's Box), whereas the reward that may be collected is not necessarily restricted to the maximum observed value. In particular, it applies to all instances and strategies, such that the rewards are independent and only observed rewards can be collected.

\begin{lemma}[rf. Lemma 1 of \citet{KleinbergWW16}]
\label{lemma:kleinberg-lemma}
Let $(c_i, V_i)_{i \in [n]}$ and a strategy $\pi$ be such that: (1) the random variables  $(V_i)_{i \in [n]}$ are independent and (2) $A_i(\pi) \leq I_i(\pi)$ holds for every $i \in [n]$. Then,
\begin{equation}\label{eq:kleinberg-ineq}
     \EX\bigg[\sum_{i=1}^n A_i(\pi) V_i - \sum_{i=1}^n I_i(\pi) c_i\bigg]\leq \EX\bigg[\sum_{i=1}^n A_i(\pi)Y_i \bigg] .
\end{equation}
 Furthermore, inequality \eqref{eq:kleinberg-ineq} becomes an equality if $\pi$ has the following property: whenever it inspects a box $i$ and samples a value $V_i > r_i$, it accepts the box. 
\end{lemma}

Note that the lemma captures variants like Constrained Pandora's Box (the constraints can be simulated via $(I_i(\pi))_{i \in [n]}$) and Pandora's Box With Commitment (where $A_i(\pi)=1$ holds only for the last inspected box). 

Regarding the computational aspects of this work, our goal is to design strategies in polynomial time, assuming an appropriate value oracle. Given an instance $\Instance$, a vector of reservation values $(r_1,\dots, r_n)$, and a set of boxes $S \subseteq [n]$, the oracle outputs the expected maximum of the random variables $\min\{V_{i}, r_i\}$, for all $i \in S$. When the support of the distributions is polynomially bounded, this oracle can be simulated directly. In general, however, it can be estimated through sampling. For further details on how sample-based discrete distributions can be used to approximate continuous ones in our setting, we refer the reader to \citet{Guo0T021}.
\medskip

\noindent\textbf{Bounding the Adaptivity Gap.}
Consider an instance $\InstanceConstraint=\left((0, V_i)_{i \in [n]}, \mathcal{F} \right)$ of Constrained Pandora's Box with $\mathcal{F}$ being prefix-closed. Even though inspecting boxes incurs no cost in this stochastic optimization problem, finding the optimal adaptive strategy may be challenging due to the constraints imposed by $\mathcal{F}$. However, \citet{Singla18} showed that the expected utility of the optimal \emph{non-adaptive} strategy is a constant approximation of the optimal adaptive strategy, with the ratio being referred to as the \emph{adaptivity gap}. The constant was later improved by \citet{BradacSZ19}.

\begin{lemma}[due to \citet{BradacSZ19}]\label{lemma:adaptivity-gap}
Let $\InstanceConstraint=\left((0, V_i)_{i \in [n]}, \mathcal{F} \right)$ be an instance of Constrained Pandora's Box, where $\mathcal{F}$ is prefix-closed. Then,
\[
    \max_{S \in \mathcal{F}} \,\EX\big[\max_{i \in S} V_i \big] \geq \frac{1}{2} \max_{\pi \in \Pi} \,\EX\big[\max_{i \in S(\pi)} V_i \big]\,,
\]
where $\Pi$ is the set of all adaptive strategies for $\InstanceConstraint$.
\end{lemma}

\subsection{Prophet Inequality Problems}
A final class of stochastic optimization problems useful to our work is Prophet Inequality problems. In such problems, there is a set of random variables $Y_1, \dots, Y_n$, each drawn from a publicly known, non-negative distribution $D_i$. The distributions $D_1, \dots, D_n$ are independent, and in the standard Prophet Inequality setting, the order in which the random variables arrive is adversarial but known. That is, let $\sigma$ be a permutation of $(1, \dots, n)$ given to the decision maker (also known as ``gambler''). At round $i \in [n]$, the decision maker samples $Y_{\sigma(i)} \sim D_{\sigma(i)}$ and may either accept $Y_{\sigma(i)}$ or move on to the next random variable. Her goal is to approximate $\EX[\max_{j \in [n]} Y_j]$, i.e., to compete with a prophet who has access to all random realizations beforehand. Let $\InstanceProphet = (Y_i)_{i \in [n]}$ be an instance of the problem  and $\tau=(\tau_i)_{i \in [n]}$ be a collection of thresholds. We use $i^*_{\sigma}$ to denote the random index $\sigma(k) \in [n]$ for which $Y_{\sigma(k)} 
> \tau_{\sigma(k)}$ and $Y_{\sigma(j)} \leq \tau_{\sigma(j)}$ for all $j$ such that $j < k$. \citet{samuel84} showed a simple rule based on a single threshold that achieves a $2$-approximation.
\begin{lemma}[due to \citet{samuel84} and \citet{KleinbergW19}]\label{lemma:prophet-inequality}
     Let $\InstanceProphet=(Y_i)_{i \in [n]}$ be an instance of the Prophet Inequality problem, and let $\sigma$ be any permutation of the random variables. For thresholds $(\tau_i)_{i \in [n]}$ with $\tau_i=\frac{1}{2}\, \EX\big[\max_{j \in [n]}Y_j\big]$ for all $i \in [n]$, it holds that $\EX\left[Y_{i^*_{\sigma}}\right] \geq \frac{1}{2}\, \EX\left[\max_{j \in [n]}Y_j\right]$.
\end{lemma}
An important variant of the problem is the \textit{Free-Order Prophet Inequality} problem in which the decision maker is free to choose the permutation $\sigma$; note that this is the same as Pandora's Box With Commitment where all costs are zero. 
We summarize the state-of-the-art upper bound for the problem, due to \citet{BubnaC23}, in the following lemma.
\begin{lemma}[due to \citet{BubnaC23}]\label{lemma:fo-prophet-inequality}
    Let $\InstanceProphet=(Y_i)_{i \in [n]}$ be an instance of the Prophet Inequality problem. We can always find a permutation $\sigma$ of the random variables and thresholds $(\tau_i)_{i \in [n]} \in \mathbb{R}^n_{>0}$, such that
    $\EX\left[Y_{i^*_{\sigma}}\right] \geq 0.7258 \,\EX\Big[\max\limits_{j \in [n]}Y_j\Big]$.
\end{lemma}
We refer the reader to \citet{BubnaC23} for 
the exact closed-form expressions of the
thresholds $(\tau_i)_{i \in [n]}$ used 
in Lemma \ref{lemma:fo-prophet-inequality}.

\subsection{Monotone Submodular Functions}

We now define monotone submodular functions, as part of our problem (similarly as many related problems) will reduce to maximizing a monotone submodular function subject to combinatorial constraints.

\begin{definition}\label{def:mono-sm}
Given a ground set $[n]$, a function $f:2^{[n]} \to \mathbb{R}$ is \emph{monotone submodular} if: \emph{(i)} $f(S) \leq f(T)$, for every $S \subseteq T \subseteq [n]$ (monotonicity), and \emph{(ii)} $f(S)+f(T) \geq f(S \cup T) + f(S \cap T)$, for every $S, T \subseteq [n]$ (submodularity).
\end{definition}

Let $Y_1 ,\dots, Y_n$ be a collection of random variables, with each $Y_i$ drawn from a non-negative distribution $D_i$. Consider the set function defined as $f(S):= \EX\left[\max_{i \in S}Y_i\right]$ 
for each $S \subseteq [n]$ (using the convention that $\max_{i \in \emptyset}Y_i = 0$). 
It is not hard to show that $f$ is monotone submodular; see \Cref{app:missing} for a proof. Recall that $f$ is normalized if $f(\emptyset) = 0$. 
\begin{lemma}\label{lemma:f-monotone-submodular}
 The function  $f$  above is normalized, non-negative, and monotone submodular. 
\end{lemma}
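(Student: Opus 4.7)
The plan is to verify each of the four properties in turn, with submodularity being the only nontrivial one. Normalization is immediate from the stated convention $\max_{i \in \emptyset} Y_i = 0$, which gives $f(\emptyset) = \EX[0] = 0$. Non-negativity is also immediate: since each $D_i$ is supported on $\mathbb{R}_{\geq 0}$, the random variable $\max_{i \in S} Y_i$ is almost surely non-negative, so its expectation is non-negative.

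For monotonicity, I would argue pointwise. Fix any realization of $(Y_1, \dots, Y_n)$ and any $S \subseteq T \subseteq [n]$. Then clearly $\max_{i \in S} Y_i \leq \max_{i \in T} Y_i$, since the maximum is taken over a superset on the right. Taking expectations preserves this inequality, yielding $f(S) \leq f(T)$.

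For submodularity, my plan is again to reduce to a pointwise inequality and then take expectations. Fix a realization of $(Y_1, \dots, Y_n)$ and set $g(U) := \max_{i \in U} Y_i$ (with $g(\emptyset)=0$). It suffices to show that $g(S) + g(T) \geq g(S \cup T) + g(S \cap T)$ for every $S, T \subseteq [n]$. Let $a = g(S)$ and $b = g(T)$. Since the maximum over a union equals the maximum of the maxima, $g(S \cup T) = \max(a,b)$. Since $S \cap T$ is contained in both $S$ and $T$, the pointwise monotonicity argument above gives $g(S \cap T) \leq a$ and $g(S \cap T) \leq b$, hence $g(S \cap T) \leq \min(a,b)$. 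Therefore
\[
g(S \cup T) + g(S \cap T) \leq \max(a,b) + \min(a,b) = a + b = g(S) + g(T).
\]
Taking expectations over the random vector $(Y_1, \dots, Y_n)$ on both sides preserves the inequality and yields exactly the submodularity condition $f(S) + f(T) \geq f(S \cup T) + f(S \cap T)$.

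There is no real obstacle here; the only care needed is to correctly handle the edge case $S \cap T = \emptyset$, which is taken care of by the convention $g(\emptyset) = 0 \leq \min(a,b)$, so the pointwise inequality continues to hold. One could alternatively phrase the submodularity step via the diminishing-returns formulation, observing that for $S \subseteq T$ and $j \notin T$ one has $f(S \cup \{j\}) - f(S) = \EX[(Y_j - \max_{i \in S} Y_i)^+] \geq \EX[(Y_j - \max_{i \in T} Y_i)^+] = f(T \cup \{j\}) - f(T)$, where the inequality follows pointwise from $\max_{i \in S} Y_i \leq \max_{i \in T} Y_i$ and monotonicity of $(\cdot)^+$. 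Either route completes the proof.
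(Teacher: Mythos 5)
Your proof is correct. The normalization, non-negativity, and monotonicity steps coincide with the paper's. For submodularity, however, you take a more elementary route than the paper: you establish the deterministic inequality $g(S\cup T)+g(S\cap T)\le g(S)+g(T)$ for every realization, using the identity $g(S\cup T)=\max(a,b)$ together with $g(S\cap T)\le\min(a,b)$ and $\max(a,b)+\min(a,b)=a+b$, and then take expectations. The paper instead works directly with expectations, conditioning on whether the maximizer $i^*$ of $S\cup T$ lies in $S$ or in $T\setminus S$, and bounds each conditional expectation separately. Both arguments ultimately rest on the same pointwise fact, but yours isolates it cleanly as a deterministic statement about set functions (so it needs no independence, no conditioning, and handles the tie-breaking issue the paper has to mention), while the paper's version keeps everything inside a single chain of equalities and inequalities on expectations. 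Your handling of the $S\cap T=\emptyset$ edge case via $g(\emptyset)=0\le\min(a,b)$ is the right observation (it uses non-negativity of the $Y_i$), and your alternative diminishing-returns phrasing via $f(S\cup\{j\})-f(S)=\EX[(Y_j-\max_{i\in S}Y_i)^+]$ is also valid.
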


\section[A Constant Approximation for Pandora's Box Over Time]{A Constant Approximation for Pandora's Box Over Time}\label{sec:general}

Next, we show how to derive a constant approximation to the guarantee of the optimal strategy for Pandora's Box Over Time in its full generality.
Note that the problem is NP-hard, as it is a generalization of the Free-Order Prophet Inequality problem, which is known to be NP-hard, even for distributions with support of size $3$ \citep{AgrawalSZ20}. The same is true even for the special cases studied in Section \ref{sec:special-cases}.

Theorem \ref{thm:general_result} is our main result, and the remainder of the section is dedicated to its proof.
\begin{theorem}
    \label{thm:general_result}
    There exists a strategy, $\MainStrategy$, for Pandora's Box Over Time which can be computed efficiently and provides a $21.3$-approximation to an optimal strategy. 
\end{theorem}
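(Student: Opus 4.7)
My plan follows the four-step outline sketched by the authors in the Technical Challenges paragraph: (1) introduce time-indexed box proxies, (2) upper-bound $\EX[u_\Instance(\pi^*)]$ via \Cref{lemma:kleinberg-lemma}, (3) extract a feasible non-adaptive proxy set through an adaptivity-gap plus contention-resolution argument, and (4) use a prophet-style threshold as the online stopping rule.

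First, I set up an auxiliary Constrained Pandora instance $\InstanceConstraint$ on the ground set $E := \{(i,t) : i \in [n], t \in [H]\}$, where proxy $(i,t)$ has cost $\bar{c}_i(t)$ and reward $V_{it}$. Call an ordered tuple of proxies \emph{feasible} (i.e., in $\mathcal{F}$) if (i) no box $i$ is used in more than one proxy, (ii) all inspection times are pairwise distinct, and (iii) the processing-time constraint holds, i.e., if the $j$-th inspected proxy is $(i_j, t_j)$ with $t_1 < \dots < t_k$, then $t_{j+1} \ge t_j + 1 + p_{i_j}$. This $\mathcal{F}$ is prefix-closed, so both \Cref{lemma:kleinberg-lemma} and \Cref{lemma:adaptivity-gap} apply. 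Moreover, any strategy $\pi$ for $\Instance$ canonically induces a feasible strategy $\pi'$ for $\InstanceConstraint$ with the same sequence of inspected boxes and inspection times, but whose utility ignores value discounting. Since $\bar{v}_i(V_{it}, \tau) \leq V_{it}$, we have $\EX[u_{\Instance}(\pi)] \le \EX[u_{\InstanceConstraint}(\pi')]$; in particular, $\EX[u_{\Instance}(\pi^*)]$ is upper bounded by the constrained problem's optimum.

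Next, let $r_{(i,t)}$ be the reservation value of proxy $(i,t)$ and $Y_{(i,t)} := \min(V_{it}, r_{(i,t)})$. By \Cref{lemma:kleinberg-lemma}, the constrained optimum is upper bounded by $\EX[\max_{(i,t)\in S(\pi^*)} Y_{(i,t)}]$, and by \Cref{lemma:adaptivity-gap} some \emph{fixed} feasible set $S^\dagger \in \mathcal{F}$ achieves $\EX[\max_{(i,t)\in S^\dagger} Y_{(i,t)}] \ge \tfrac12 \EX[u_\Instance(\pi^*)]$. The set function $f(S) := \EX[\max_{(i,t)\in S} Y_{(i,t)}]$ is monotone submodular by \Cref{lemma:f-monotone-submodular}, and $\mathcal{F}$ admits a clean combinatorial description: it is a matching in a bipartite block hypergraph whose left vertices are boxes and whose right vertices are time slots, where each proxy $(i,t)$ is the hyperedge $\{i\} \cup \{t, t+1, \dots, t+p_i\}$. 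I would maximize the multilinear extension of $f$ over the convex hull of such matchings, then round via a contention-resolution scheme (CRS) built by \emph{composing} a CRS for the box side (a transversal-style constraint) with one for the time side (a block-interval packing constraint). This delivers, efficiently, a $\hat S \in \mathcal F$ with $f(\hat S) \ge \alpha \cdot f(S^\dagger)$ for some explicit constant $\alpha$; thus $f(\hat S) \ge \tfrac{\alpha}{2}\,\EX[u_\Instance(\pi^*)]$.

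For the online phase, $\MainStrategy$ inspects the elements of $\hat S$ in order of their designated times (idling when needed) and uses the threshold $\tau := \tfrac12 \EX[\max_{(i,t)\in\hat S} Y_{(i,t)}]$, accepting and halting at the first proxy with $V_{it} \ge \tau$. Crucially, halting immediately means $T_\pi - t_i(\pi)=0$ for the accepted box, so $\bar{v}_i$ does not discount its reward; likewise the cost ledger aligns with the equality case of \Cref{lemma:kleinberg-lemma} since we only pay for proxies we genuinely intend to consider. The threshold argument of \Cref{lemma:prophet-inequality} loses a further factor $2$, and unpacking all multiplicative losses yields the claimed $21.3$-approximation. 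The main obstacle is the CRS step: the block-hypergraph matching constraint lacks an off-the-shelf $c$-balanced scheme, so composing CRSs for the two sides must be done carefully, and extracting a sharp selectability constant—especially in the presence of arbitrary, heterogeneous $p_i$'s that cause the time-slot blocks $[t, t+p_i]$ to overlap in varied ways—is where the bulk of the technical work, and the precise numerical constant $21.3$, lives.
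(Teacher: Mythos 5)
Your proposal follows the paper's proof essentially step for step: the same time-indexed proxy instance $\InstanceConstraint_{\textsc{proxy}}$ with the matching constraint on the block bipartite hypergraph of Definition~\ref{def:hypergraph-def}, the same upper bound $\EX[u_{\Instance}(\pi^*)] \le 2\max_{M}f(M)$ via \Cref{lemma:kleinberg-lemma} and \Cref{lemma:adaptivity-gap}, the same reduction to submodular maximization over such matchings solved by rounding the multilinear relaxation with a composed CRS (a matroid scheme on the box side composed with an interval-packing scheme on the time side, exactly as in \Cref{lemma:our_crs}), and the same prophet threshold $\tau=\tfrac12\EX[\max Y]$. The paper's bookkeeping is $4\alpha$ with $\alpha=5.32$ from \Cref{cor:block_matching}, giving $21.28\le 21.3$; the selectability constant you leave open is obtained by composing the $(b,(1-e^{-b})/b)$-balanced matroid CRS with the $(b,e^{-b})$-balanced interval-graph CRS and optimizing $b$.

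One detail in your online phase needs fixing. You have $\MainStrategy$ inspect every element of $\hat{S}$ and halt at the first proxy with $V_{it}\ge\tau$; the paper's strategy additionally \emph{skips} any proxy with $r_{(i,t)}<\tau$ (Line~\ref{strategy:main:opened} of Strategy~\ref{strategy:main}). This filter is not cosmetic. Without it the strategy can inspect a proxy, draw $V_{(i,t)}\in[r_{(i,t)},\tau)$, and decline it, which violates the equality condition of \Cref{lemma:kleinberg-lemma}; you would then only have $\EX[u_{\InstanceConstraint}(\hat{\pi})]\le\EX[\sum A_{(i,t)}Y_{(i,t)}]$, which is the wrong direction for lower-bounding your own strategy's utility. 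It also breaks the identification of the accepted proxy with the index $i^*$ of \Cref{lemma:prophet-inequality}: the prophet guarantee is for thresholding the $Y_{(i,t)}$'s, i.e., accepting the first element with $\min(V_{(i,t)},r_{(i,t)})\ge\tau$, not the first with $V_{(i,t)}\ge\tau$. With the reservation-value filter in place, both the Kleinberg equality and the prophet correspondence go through, and your argument matches the paper's.
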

The proof is structured as follows.
In Section \ref{subsec:proxy}, for each Pandora's Box Over Time instance, $\Instance$, we identify a constrained instance, $\InstanceConstraint:=\InstanceConstraint(\Instance)$, with a carefully constructed prefix-closed constraint. We refer to $\InstanceConstraint$ as the \emph{proxy} instance for $\Instance$. This proxy instance will be central to devising a ``good'' strategy for our problem in Section \ref{subsec:main-strategy}, where we present our main result: a strategy that approximates the guarantee of the optimal strategy. To this end, we argue about how the performance of our strategy crucially depends on the performance of an algorithm for maximizing a monotone submodular function under a particular feasibility constraint. Finally, in Appendix \ref{subsec:crs}, we 
obtain a constant factor approximation algorithm for the submodular maximization problem in question. Combining these two results (Corollary \ref{cor:main-approx-star} and Theorem \ref{cor:block_matching}) yields Theorem \ref{thm:general_result}.

We often work with two instances (typically of different variants of the problem) and their respective strategies within the same proof. So, when needed for clarity, we write, e.g., $S(\Instance,\pi)$ and $S(\InstanceConstraint,\hat{\pi})$ instead of $S(\pi)$ and $S(\hat{\pi})$, respectively. 

\subsection{Reduction to Constrained Pandora's Box}\label{subsec:proxy}

Here we show that each Pandora's Box Over Time instance $\Instance$ has an equivalent representation as a Constrained Pandora's Box instance $\InstanceConstraint$ with certain exploration constraints. Moreover, we define a class of strategies for each such instance, each of which has a one-to-one correspondence with a strategy for the underlying Pandora's Box Over Time instance $\Instance$. The properties of $\InstanceConstraint$ and its associated strategy will be critical in our analysis. We begin by defining a hypergraph that we will later associate with  $\InstanceConstraint$.  

\begin{definition}\label{def:hypergraph-def}
    Given a Pandora's Box Over Time instance $\Instance$, its associated bipartite hypergraph $\mathcal{H}(\Instance) = (L, R, E)$ is defined as follows:
\begin{itemize}
    \item For each $i \in [n]$, there is a vertex $b_i \in L$.
    \item For each time slot $j \in [H]$, there is a vertex $t_j \in R$.
    \item For each $(i, j) \in [n] \times [H]$, there is a hyperedge $e(i, j) := \left\{ b_i \right\} \cup \{ t_k \mid k = j, \dots, \allowbreak j + p_i \}$ in $E$.
\end{itemize}
\end{definition}
Note that the construction of $\mathcal{H}(\mathcal{I})$ described above is done in polynomial time since the time horizon $H$ is part of the input (i.e., we are given $H$ distributions for each box $i \in [n]$.) To simplify the notation, when $\Instance$ is clear, we will sometimes write $\mathcal{H}$ instead of $\mathcal{H}(\Instance)$. We use $\mathcal{M}(\Instance)$ to denote the set of \emph{matchings} of $\mathcal{H}(\Instance)$, i.e., all the collections of disjoint hyperedges of $\mathcal{H}(\Instance)$. Note that while $\mathcal{H}(\Instance)$ is indeed a bipartite hypergraph (i.e., its vertices are partitioned into two sets, $L$ and $R$, such that $|L\cap e|\le 1$, for every hyperedge $e\in E$), it has even more structure. In particular,  for every hyperedge $e\in E$, the set $R\cap e$ consists of \emph{consecutive} (with respect to their index) vertices of $R$. We call such hypergraphs \emph{block bipartite}. This structure on the hyperedges of $\mathcal{H}(\Instance)$ will allow us to approximate the optimal solution to a submodular maximization problem on $\mathcal{M}(\Instance)$ in polynomial time; see the end of this section for the definition of the \emph{Submodular Block Matching} problem. 

\smallskip
\noindent\textbf{Proxy Instance and Proxy Strategy.} For each instance $\Instance$, we construct a \emph{proxy} Constrained Pandora's Box instance denoted by $\InstanceConstraint$. The set of boxes of $\InstanceConstraint$ is $[n] \times [H]$, i.e., for each box $i \in [n]$ and each time slot $t \in [H]$, we add a box labeled $(i, t)$ to $\InstanceConstraint$. Then, for each box $(i, t)$ we set the reward $V_{(i, t)}$ to follow the distribution $D_{it}$ and the cost of the box $(i, t)$ to be $c_{(i, t)} = \bar{c}_i(t)$. Finally, we construct the collection of ordered tuples $\mathcal{F}$ as follows: for each matching $\left\{e(i_1, j_1), \dots, e(i_k, j_k)\right\} \subseteq \mathcal{M}(\Instance)$, we add to $\mathcal{F}$ the ordered tuple of the boxes $\left( \left(i_{\sigma(1)}, j_{\sigma(1)}\right), \dots, \left(i_{\sigma(k)}, j_{\sigma(k)}\right) \right)$ where $\sigma$ is a permutation of the boxes so that $j_{\sigma(1)} < \dots < j_{\sigma(k)}$. Furthermore, for each strategy $\pi$ of $\Instance$, we define below a strategy $\hat{\pi}$ for $\InstanceConstraint$ which we call the \emph{proxy strategy} of $\pi$.
\begin{definition}\label{def:proxy-strategy}
    Let $\Instance$ be an instance of Pandora's Box Over Time and $\InstanceConstraint$ be its proxy Constrained Pandora's Box instance. Given a strategy $\pi$ for $\Instance$, we define its \emph{proxy strategy} $\hat{\pi}$ as the strategy that satisfies the following:
\begin{itemize}
    \item  $\hat{\pi}$ inspects box $(i, t) \in [n] \times [H]$ if and only if box $i \in S\left(\pi \right)$ and $t = t_i(\pi)$.
    \item  $\hat{\pi}$ stays idle at time $j \in [T_{\pi}]$ if and only if strategy $\pi$ stays idle at time $j$.
    \item  $\hat{\pi}$ halts at time $j \in [T_{\pi}]$ if and only if strategy $\pi$ halts at time $j$.
\end{itemize}
\end{definition}

The following lemma connects the two problems (the original and the proxy) and plays a key role in our analysis.
\begin{lemma}\label{lemma:reduction}
    Let $\Instance$ be an instance of Pandora's Box Over Time. For every strategy $\pi$ for $\Instance$, its proxy strategy $\hat{\pi}$ for $\InstanceConstraint$ is feasible. Furthermore, 
    \begin{equation}\label{eq:pihat-u-expr}
    \EX\Big[u_{\InstanceConstraint}(\hat{\pi})\Big] = \EX\bigg[ \max_{i \in S(\Instance,\pi)} V_{it_i(\pi)} - \sum_{i \in S(\Instance,\pi)} \bar{c}_i\left(t_i(\pi)\right)\bigg] \,.
    \end{equation}
\end{lemma}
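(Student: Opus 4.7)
The plan is to prove the two claims of the lemma in sequence: first that $\hat{\pi}$ is feasible, then the utility identity, both by unpacking the definitions carefully. I would argue pointwise over realizations so that passing to expectations is automatic.

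For feasibility, the key observation is that if a strategy $\pi$ for $\Instance$ inspects two distinct boxes $i, i' \in S(\pi)$ with $t_i(\pi) < t_{i'}(\pi)$, then the processing-time constraint of Pandora Over Time forces $t_{i'}(\pi) \geq t_i(\pi) + 1 + p_i$, hence the ``time-slot intervals'' $\{t_i(\pi), \dots, t_i(\pi) + p_i\}$ and $\{t_{i'}(\pi), \dots, t_{i'}(\pi) + p_{i'}\}$ are disjoint. Combined with the fact that $\pi$ inspects each box at most once (so the left vertices $b_i$ of the hyperedges are pairwise distinct), this shows that
\[
\bigl\{ e(i, t_i(\pi)) : i \in S(\pi) \bigr\} \in \mathcal{M}(\Instance)
\]
is a matching of $\mathcal{H}(\Instance)$. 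By the construction of $\mathcal{F}$, the tuple obtained by sorting these boxes by their timeslot belongs to $\mathcal{F}$. Moreover, $\mathcal{F}$ is prefix-closed since any sub-matching of a matching in a block bipartite hypergraph is again a matching (and hence any prefix in time of a tuple associated with a matching corresponds to a smaller matching, which produces a tuple in $\mathcal{F}$). By Definition~\ref{def:proxy-strategy}, $\hat{\pi}$ inspects exactly the boxes $(i, t_i(\pi))$ in the same temporal order as $\pi$, so every prefix of $S(\InstanceConstraint_{\textsc{proxy}}, \hat{\pi})$ lies in $\mathcal{F}$ and $\hat{\pi}$ is feasible.

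For the utility identity, I would simply substitute into \eqref{eq:utility-constrained}. By construction of $\hat{\pi}$ we have $S(\InstanceConstraint_{\textsc{proxy}}, \hat{\pi}) = \{(i, t_i(\pi)) : i \in S(\Instance, \pi)\}$, the cost of box $(i, t_i(\pi))$ equals $\bar{c}_i(t_i(\pi))$, and the reward $V_{(i, t_i(\pi))}$ has the same distribution as $V_{i t_i(\pi)}$ (in fact, the two instances share independent families of reward variables with identical marginals, so we may couple them). Substituting into \eqref{eq:utility-constrained} and taking expectations yields \eqref{eq:pihat-u-expr}.

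The main obstacle, modest as it is, is verifying that the two types of exploration constraints in Pandora Over Time (no box is inspected twice, and processing times must elapse between consecutive inspections) translate \emph{exactly} into the matching condition in the block bipartite hypergraph $\mathcal{H}(\Instance)$; the rest of the argument is essentially definitional. A minor notational subtlety is that one must phrase the argument pointwise on realizations (so that adaptivity is handled by quantifying over the random coins of $\pi$), and then take expectations at the very end.
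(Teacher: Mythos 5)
Your proposal is correct and follows essentially the same route as the paper: feasibility is reduced to showing that $\{e(i,t_i(\pi)) : i \in S(\pi)\}$ is a matching (distinct left vertices since no box is inspected twice, disjoint right-vertex blocks since $t_{i'}(\pi) \ge t_i(\pi)+1+p_i$), after which the tuple sorted by timeslot lies in $\mathcal{F}$, and the utility identity is obtained definitionally by matching costs and (coupled) rewards of the proxy boxes and substituting into \eqref{eq:utility-constrained}. Your direct disjoint-interval argument is, if anything, a slightly cleaner phrasing of the paper's case analysis via idle rounds, so no substantive difference remains.
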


\begin{proof}
    Let $(i_1, \dots, i_k)=S(\Instance, \pi)$. We first argue that strategy $\hat{\pi}$ is indeed feasible for $\InstanceConstraint$, i.e., it satisfies the sequence constraint $\mathcal{F}$ of $\InstanceConstraint$. By the above rules, the sequence of inspected boxes is
     ${S}\big(\InstanceConstraint, \hat{\pi}\big) = \left(\left(i_1, t_{i_1}(\pi)\right), \dots, \left(i_k, t_{i_k}(\pi)\right) \right) $.
    Clearly, the precedence requirement is satisfied, i.e., for any two boxes $j,j'$ with $i_j(\pi)< t_{i_{j'}}(\pi)$, box $(i_{j}, t_{i_j}(\pi))$ precedes box $(i_{j'}, t_{i_{j'}}(\pi))$ in ${S}\big(\InstanceConstraint, \hat{\pi}\big)$. In Claim \ref{claim:is-matching}, we show that the subgraph of $\mathcal{H}(\Instance)$ induced by $\big(\InstanceConstraint, \hat{\pi}\big)$ is a matching.
    \begin{claim}\label{claim:is-matching}
        The set  $M=\left\{e\left(i_1, t_{i_1}(\pi)\right), \dots, e\left(i_k, t_{i_k}(\pi)\right)\right\}$ is a matching.
    \end{claim}
    \begin{proof}
    Suppose, toward a contradiction, that $M$ is not a matching. This means that there is a pair $(i, t_i(\pi))$ and $(j, t_j(\pi))$ with $e(i, t_i(\pi)) \cap e(j, t_j(\pi)) \neq \emptyset$. Assume, without loss of generality, that $t_i(\pi) < t_j(\pi)$. Since each box in $\Instance$ can only be inspected once by $\pi$, by the first bullet of Definition \ref{def:proxy-strategy}, the same is true for $\InstanceConstraint$ and $\hat{\pi}$. Thus, clearly $i \neq j$, and therefore $b_i \neq b_j$. Furthermore, since strategy $\pi$ for $\Instance$ respects the processing time for both $i, j \in S(\pi)$, it holds that $t_j(\pi) \geq t_i(\pi) + p_i + 1$. If $p_i=0$, we trivially obtain that $e(i, t_i(\pi)) \cap e(j, t_j(\pi)) = \emptyset$, a contradiction. Consider now the case of $p_i \geq 1$. In this case, strategy $\pi$ stays idle in the interval $[t_i(\pi) + 1, t_i(\pi) + p_i]$.  However, by the second bullet of Definition \ref{def:proxy-strategy}, strategy $\hat{\pi}$ also stays idle in the interval $[t_i(\pi) + 1, t_i(\pi) + p_i]$. Consequently, by the construction of $\mathcal{H}(\Instance)$, there is no hyperedge $e' \in E$ such that $ e(i, t_i(\pi)) \cap e' \neq \emptyset$, which implies that $e(i, t_i(\pi)) \cap e(j, t_j(\pi)) = \emptyset$. This is a contradiction.
    \end{proof}    
    Now notice that, by the construction of $\big(\InstanceConstraint, \hat{\pi}\big)$, we have 
    \begin{align*}
        \EX\bigg[\max_{i \in S\left(\Instance, \pi \right)} V_{it_i(\pi)} - \!\!\!\sum_{i \in S\left(\Instance, \pi \right)} \!\!\bar{c}_i\left(t_i(\pi)\right) \bigg] &= \EX\bigg[\max_{\left(i, j\right) \in S\left(\InstanceConstraint, \hat{\pi}\right)} \!\! V_{\left(i, j\right)} - \!\!\!\!\sum_{\left(i, j\right) \in S\left(\InstanceConstraint, \hat{\pi}\right)} \!\!\! c_{\left(i, j\right)}\bigg] \\ &= \EX\left[ u_{\InstanceConstraint}\left(\hat{\pi}\right)\right],
    \end{align*}
    which concludes the proof.
\end{proof}
We continue by presenting an upper bound on the optimal expected utility of Pandora's Box Over Time instances, which will be useful in the analysis for our main result in Section \ref{subsec:main-strategy}. We slightly abuse notation and write $Y(e(i,j)):=Y_{(i,j)}=\min\left\{V_{(i,j)}, r_{(i,j)}\right\}$ for each hyperedge $e(i,j) \in E\left[\mathcal{H}\right]$ and box $(i,j) \in [n] \times [H]$.

\begin{lemma}\label{lemma:ub-sm-maximization}
    Let $\Instance$ be an instance of Pandora's Box Over Time and  $\pi^*$ be an optimal strategy for $\Instance$. Then,
         $\EX\left[u_{\Instance}(\pi^*)\right] \leq 2\cdot \!\!\max\limits_{M \in \mathcal{M}(\Instance)}  \!\EX\Big[\max\limits_{e(i,j) \in M}Y\left(e(i,j)\right)\Big]$.
\end{lemma}

\begin{proof}
    We have:
    \begin{align*}
        \EX[u_{\Instance}(\pi^*)] &= \EX\bigg[\max_{i \in S\left(\Instance, \pi^*\right)} \bar{v}_i\left(V_{it_i(\pi^*)}, T_{\pi^*} - t_i(\pi^*)-p_i\right) - \sum_{i \in S\left(\Instance, \pi^*\right)} \bar{c}_i\left(t_i(\pi^*)\right)\bigg] \\
        &\leq \EX\bigg[\max_{ i \in S\left(\Instance,  \pi^*\right)} V_{it_i(\pi^*)} - \sum_{i \in S\left(\Instance, \pi^*\right)} \bar{c}_i\left(t_i(\pi^*)\right)\bigg]  
        = \EX\left[u_{\InstanceConstraint}(\hat{\pi})\right] \\
        &\leq \EX\bigg[\sum_{(i,j) \in S\left(\InstanceConstraint,\hat{\pi}\right)} \!\!\!\!\! A_{(i,j)}(\hat{\pi}) Y_{(i,j)}\bigg] 
        \leq \EX\bigg[\sum_{(i,j) \in S\left(\InstanceConstraint, \hat{\pi}\right)} \!\!\!\! A_{(i,j)}(\hat{\pi})  \!\! \max_{(k,\ell) \in S\left(\InstanceConstraint, \hat{\pi}\right)} \!\!\!\!Y_{(k,\ell)}\bigg] \\ 
        &\leq \EX\bigg[\max_{(k,\ell) \in S\left(\InstanceConstraint, \hat{\pi}\right)} Y_{(k,\ell)}\bigg] 
        \leq \max_{\pi} \, \EX\bigg[\max_{(k,\ell) \in S\left(\InstanceConstraint, \pi\right)} Y_{(k,\ell)}\bigg] \numberthis\label{eq:pre-adaptivity-gap}.
    \end{align*}
    The first equality follows from the definition of the utility of a Pandora's Box Over Time instance in \eqref{eq:utility}. The first inequality follows from the fact that $\bar{v}_i$ as a function of its second argument only, attains a maximum at $0$, by definition. The second equality follows since, by Lemma \ref{lemma:reduction}, strategy $\hat{\pi}$ for the proxy constrained instance $\InstanceConstraint$ satisfies \eqref{eq:pihat-u-expr}. Then, we apply Lemma \ref{lemma:kleinberg-lemma} for $\big(\InstanceConstraint, \hat{\pi}\big)$ and obtain the second inequality due to \eqref{eq:kleinberg-ineq}. Finally, the fourth inequality follows since $\hat{\pi}$ inspects at most one box of $\InstanceConstraint$ for each random realization.

    Observe that the RHS of \eqref{eq:pre-adaptivity-gap} equals the optimal expected utility of the Constrained Pandora's Box instance with no costs $\big(\big(0, Y_{(i,j)}\big)_{(i,j) \in [n] \times [H]}, \mathcal{F}\big)$. Since $\mathcal{F}$ is a prefix-closed constraint, we can apply Lemma \ref{lemma:adaptivity-gap} for this instance and obtain:
    \begin{align}
         \max_{\pi} \, \EX\bigg[\max_{(k,\ell) \in S\left(\InstanceConstraint, \pi\right)} Y_{(k,\ell)}\bigg] &\leq 2\cdot \max_{S \in \mathcal{F}} \,  \EX\Big[\max_{(i,j) \in S} Y_{(i,j)}\Big]\nonumber\\ & = 2 \cdot \max_{M \in \mathcal{M}(\Instance)}  \EX\Big[\max_{e(i,j) \in M} Y\left(e(i,j)\right)\Big]\label{eq:adaptivity-gap-application} \,,
    \end{align}
    where the equality directly follows from the construction of the hypergraph $\mathcal{H}$.
    Combining \eqref{eq:pre-adaptivity-gap} and \eqref{eq:adaptivity-gap-application} completes the proof.
\end{proof}
Note that the last part of the proof, where Lemma \ref{lemma:adaptivity-gap} is invoked, is the reason why we need independence of reward distributions, not only across boxes, but also across rounds for each single box.

Given an instance $\Instance$ and its associated hypergraph $\mathcal{H} = \mathcal{H}(\Instance)$, consider the set function $f: 2^{E\left[\mathcal{H}\right]} \to \mathbb{R}_{\geq 0}$ defined as 
\begin{equation}\label{eq:submod_obj}
    f(M):=\EX\Big[\max_{e(i,j) \in M} Y_{(i,j)} \Big], \text{\ \ for each \ \ } M \subseteq  E\left[\mathcal{H}\right] \,.
\end{equation}
By Lemma \ref{lemma:f-monotone-submodular}, $f$ is a non-negative monotone submodular function with ground set $E\left[\mathcal{H}\right]$. Under this perspective, we can observe that
\begin{equation*}
 \max_{M \in \mathcal{M}(\Instance)} \EX\Big[\max_{e(i,j) \in M} Y\left(e(i,j)\right)\Big] = \max_{M \in \mathcal{M}(\Instance)} f(M) 
\end{equation*}
and interpret the inequality of Lemma \ref{lemma:ub-sm-maximization} as follows: the optimal expected utility of a Pandora's Box Over Time instance is upper-bounded by $2$ times the optimal solution of an instance of a monotone submodular maximization problem subject to a matching in a block bipartite hypergraph constraint. We conclude the section with the formal statement of this optimization problem.

\smallskip
\noindent\textbf{The Submodular Maximization Problem.} Let $\mathcal{H} = (L, R, E)$ be a block bipartite hypergraph,  $\mathcal{M} \subseteq 2^{E}$ be its set of matchings, and  $f: 2^{E} \to \mathbb{R}_{\geq 0}$ be a normalized monotone submodular function. The {Submodular Matching on Block Bipartite Hypergraphs} problem, or \emph{Submodular Block Matching} for short, asks for a matching $M^* \in \argmax_{M \in \mathcal{M}} f(M)$.
In Appendix \ref{subsec:crs}, we show how to get a polynomial-time $5.32$-approximation algorithm for the problem. 

\begin{restatable}{theorem}{block}
\label{cor:block_matching}
There is a polynomial-time $5.32$-approximation algorithm for Submodular Block Matching. 
\end{restatable}

\subsection{Our Order-Non-Adaptive Strategy}\label{subsec:main-strategy}
In this section, we present $\MainStrategy$ (Strategy \ref{strategy:main}), a simple strategy for Pandora's Box Over Time, which we show to be a {$21.3$}-approximation of the optimal expected utility.

\begin{strategy}[ht]
\DontPrintSemicolon
\caption{\MainStrategy\label{strategy:main}}
{
\nonl $\hspace{-2.3ex}\rhd$ {\bf{Input:}} An instance $\Instance = \left(\bar{c}_i, p_i, (V_{it})_{t \in [H]}, \bar{v}_i\right)_{i \in [n]}$ and an $\alpha$-approximation algorithm $\textsc{Alg}$ for Submodular Block Matching. \;
\tcp{Phase 1: determine schedule of inspection and threshold}
Construct the bipartite hypergraph $\mathcal{H}(\Instance)=(L, R, E)$ as in Definition~\ref{def:hypergraph-def}.\\
Let $f(\cdot)$ be the monotone submodular function on $2^E$ defined in \eqref{eq:submod_obj}.\\
Find matching $M=\left\{e(i_1, t_1), \dots, e(i_k, t_k)\right\} \subseteq E$ using $\textsc{Alg}$ on instance with objective $f$ and constraints induced by the matchings of $\mathcal{H}(\Instance)$. \hspace{-3pt}\tcp*{{\scriptsize  $t_{1} < \dots < t_{k}$}}
Set $\tau=0.5 \cdot \EX\left[\max_{e(i,t)\in M} Y(e(i,t)) \right]$.\\
\tcp{{Phase 2: threshold-based strategy using the schedule and threshold of Phase 1}}
\For{$\ell=1,\dots, k$}{
    \If{$r_{\left(i_{\ell}, t_{\ell}\right)} > \tau$}{\label{strategy:main:opened}
        Sample $V_{\left(i_{\ell}, t_{\ell}\right)} \sim D_{i_{\ell} t_{\ell}}$ at time $t_{\ell}$.\\
        \If{$V_{\left(i_{\ell}, t_{\ell}\right)} > \tau$}{\label{strategy:main:accepted}
            \textbf{halt}\label{strategy:main:halt} \tcp*{{the reward of the last inspected box is collected}}
        }
    }
}
}
\end{strategy}

Given an instance of Pandora's Box Over Time, the first step of $\MainStrategy$ (Phase $1$) is to determine a preliminary \emph{schedule} of inspection times \emph{before} inspecting a single box. Note that the strategy may eventually not inspect all the boxes in the preliminary set, as it may halt sooner based on a threshold-based stopping rule we specify in Phase 2. Such strategies are called order-non-adaptive in the literature \citep[see, e.g.,][]{BeyhaghiCai23}.
The stopping rule we specify is inspired by the approach of \citet{EsfandiariHLM19}. The crux of this approach is to relate the expected utility of our Pandora's Box Over Time instance to the expected utility achieved by the ``gambler'' in an instance $\InstanceProphet$. Once this is accomplished, we then relate the performance of our algorithm to the performance of the prophet.
\begin{lemma}\label{lemma:u-approx-sm}
    Let $\Instance$ be an instance of Pandora's Box Over Time and let $\textsc{Alg}$ be an $\alpha$-approximation algorithm for Submodular Block Matching. It holds that
    \begin{equation*}
        \EX[u_{\Instance}(\MainStrategy)] \geq \frac{1}{2\alpha} \cdot \max_{M \in \mathcal{M}(\Instance)}  \EX\Big[\max_{e(i,j) \in M}Y\left(e(i,j)\right)\Big]\,.
    \end{equation*}
\end{lemma}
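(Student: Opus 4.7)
The plan is to lower bound $\EX[u_\Instance(\MainStrategy)]$ through a three-step reduction: I will first isolate the contribution of the ``halting box'' via a pointwise bound on the utility, then invoke a Kleinberg-style identity to absorb the inspection costs, and finally apply the prophet inequality along the schedule produced by $\textsc{Alg}$. Fix the schedule $M = \{e(i_1, t_1), \ldots, e(i_k, t_k)\}$ (with $t_1<\dots<t_k$) returned in Phase~1 of $\MainStrategy$ and abbreviate $Y_\ell := Y_{(i_\ell, t_\ell)} = \min(V_{(i_\ell, t_\ell)}, r_{(i_\ell, t_\ell)})$. Let $I_\ell$ be the indicator that $\MainStrategy$ inspects the $\ell$-th scheduled box in Phase~2, and let $L^*$ denote the (random) halting step, with the convention $L^* = \infty$ if the strategy exhausts the schedule without halting.

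The first step is a pointwise relaxation. When $\MainStrategy$ halts, it does so at the very round $t_{L^*}$ at which it inspects the halting box, so the elapsed time between inspection and collection for this box is zero, and therefore $\bar{v}_{i_{L^*}}(V_{(i_{L^*}, t_{L^*})}, 0) = V_{(i_{L^*}, t_{L^*})}$. Since the collected reward is the maximum of non-negative quantities, we obtain the pointwise bound $u_\Instance(\MainStrategy) \geq V_{(i_{L^*}, t_{L^*})} \mathbb{1}[L^* \leq k] - \sum_{\ell=1}^k I_\ell\, \bar{c}_{i_\ell}(t_\ell)$, where the inequality also holds trivially in the no-halt case because the subtracted costs are non-negative.

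The second step converts the expectation of this lower bound into a quantity that only involves the $Y_\ell$'s. I would first observe that $I_\ell$ depends only on $V_{(i_1, t_1)}, \ldots, V_{(i_{\ell-1}, t_{\ell-1})}$ and the (deterministic) reservation values, and is therefore independent of $V_{(i_\ell, t_\ell)}$. Using this independence together with $\bar{c}_{i_\ell}(t_\ell) = \EX[(V_{(i_\ell, t_\ell)} - r_{(i_\ell, t_\ell)})^+]$, and the fact that $I_\ell = 1$ forces $r_{(i_\ell, t_\ell)} \geq \tau$, the same algebraic manipulation used in the proof of Lemma~\ref{lemma:kleinberg-lemma} (invoking $Y_\ell + (V_{(i_\ell, t_\ell)} - r_{(i_\ell, t_\ell)})^+ = V_{(i_\ell, t_\ell)}$) yields
\[
\EX\!\bigl[V_{(i_{L^*}, t_{L^*})} \mathbb{1}[L^* \leq k]\bigr] \;-\; \EX\!\bigg[\sum_{\ell=1}^k I_\ell\, \bar{c}_{i_\ell}(t_\ell)\bigg] \;=\; \EX\!\bigl[Y_{L^*} \mathbb{1}[L^* \leq k]\bigr].
\]

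For the final step, note that $Y_\ell \geq \tau$ exactly when $V_{(i_\ell, t_\ell)} \geq \tau$ and $r_{(i_\ell, t_\ell)} \geq \tau$, which is precisely the condition under which $\MainStrategy$ halts at step $\ell$. Hence $L^*$ coincides with the random index selected by the threshold-$\tau$ rule applied to the independent sequence $Y_1, \ldots, Y_k$ in the given order. With $\tau = \tfrac{1}{2}\EX[\max_\ell Y_\ell]$ as set in Phase~1, Lemma~\ref{lemma:prophet-inequality} yields $\EX[Y_{L^*} \mathbb{1}[L^* \leq k]] \geq \tfrac{1}{2} f(M)$, and chaining this with $f(M) \geq \tfrac{1}{\alpha}\max_{M' \in \mathcal{M}(\Instance)} f(M')$ from the $\alpha$-approximation guarantee of $\textsc{Alg}$ completes the proof. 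The most delicate ingredient is the identity in the second step: it can be viewed as an application of Lemma~\ref{lemma:kleinberg-lemma} with acceptance indicator $A_\ell = I_\ell \mathbb{1}[V_{(i_\ell, t_\ell)} \geq \tau]$ rather than the argmax indicator, so one must verify that the equality condition of that lemma still holds --- which it does, since any inspected box with $V_{(i_\ell, t_\ell)} > r_{(i_\ell, t_\ell)}$ automatically satisfies $V_{(i_\ell, t_\ell)} > r_{(i_\ell, t_\ell)} \geq \tau$ and therefore triggers the halting rule, making it the accepted box. Once this is verified, the rest reduces to routine bookkeeping.
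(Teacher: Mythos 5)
Your proposal is correct and follows essentially the same route as the paper's proof: replace the collected value minus inspection costs by the $Y$-value of the accepted box via the equality case of Lemma~\ref{lemma:kleinberg-lemma} (checking that any inspected box with $V > r \ge \tau$ triggers the halt), identify the halting index with the threshold-rule index for the sequence $(Y_{(i_\ell,t_\ell)})_\ell$, apply Lemma~\ref{lemma:prophet-inequality} with $\tau = \tfrac12\EX[\max_{e\in M} Y(e)]$, and finish with the $\alpha$-approximation guarantee of $\textsc{Alg}$. The only difference is presentational: you carry out the Kleinberg-style accounting directly on the schedule $M$ with the acceptance indicator $A_\ell = I_\ell\,\mathbb{1}[V_{(i_\ell,t_\ell)}\ge\tau]$ and a pointwise lower bound covering the no-halt event, whereas the paper routes the same computation through the proxy instance $\InstanceConstraint_{\textsc{proxy}}$ and Lemma~\ref{lemma:reduction}; your treatment of the no-halt case is, if anything, slightly more careful.
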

\begin{proof}
    Denote $\pi := \MainStrategy$ for brevity. 
    Moreover,  let $\hat{\pi}$ be the proxy strategy of $\pi$ for $\InstanceConstraint$. We have:
    \begin{align*}
         \EX[u_{\Instance}(\pi)] &= \EX\bigg[\max_{i \in S\left(\Instance, \pi\right)} \bar{v}_i\left(V_{it_i(\pi)}, T_{\pi} - t_i(\pi)-p_i\right) - \sum_{i \in S\left(\Instance, \pi\right)} \bar{c}_i\left(t_i(\pi)\right)\bigg] \\
         &= \EX\bigg[\max_{i \in S\left(\Instance, \pi\right)} \bar{v}_i\left(V_{it_i(\pi)}, 0\right) - \sum_{i \in S\left(\Instance, \pi\right)} \bar{c}_i\left(t_i(\pi)\right)\bigg]\\
         &= \EX\bigg[\max_{i \in S\left(\Instance, \pi\right)} V_{it_i(\pi)} - \sum_{i \in S\left(\Instance, \pi\right)} \bar{c}_i\left(t_i(\pi)\right)\bigg]=\EX\left[u_{\InstanceConstraint}(\hat{\pi})\right]\,.\label{eq:before-kleinberg-general}\numberthis
    \end{align*}
    The first equality follows from the definition of the expected utility of a Pandora's Box Over Time instance in \eqref{eq:utility}. The second equality follows from the fact that $\pi$ ``collects'' at round $T_{\pi}$ the reward of the box inspected at time $t_i(\pi)=T_{\pi}-p_i$; indeed, $\pi$ halts at  $T_{\pi}$ and at that time the last reward is the only one exceeding $\tau$, so it is the one collected. The third equality follows from the definition of the function $\bar{v}_i$ for $i \in [n]$. Finally, the fourth equality holds since, by Lemma \ref{lemma:reduction}, strategy $\hat{\pi}$ for $\InstanceConstraint$ satisfies \eqref{eq:pihat-u-expr}. To continue, we show that $(\InstanceConstraint, \hat{\pi})$ satisfies the second condition of Lemma \ref{lemma:kleinberg-lemma}; in particular, we show that here \eqref{eq:kleinberg-ineq} holds with equality.
    
\begin{claim}\label{claim:kleinberg-equality-general}
    It holds that
       $ \EX\left[u_{\InstanceConstraint}(\hat{\pi})\right] = \EX\big[\sum_{(i,t) \in [n] \times [H]} A_{(i,t)}(\hat{\pi}) Y_{(i,t)} \big]$.
\end{claim}

\begin{proof}
    According to Lemma \ref{lemma:kleinberg-lemma}, this identity holds if the following condition is satisfied for 
   $\big(\InstanceConstraint, \hat{\pi} \big)$:
     \[\left( \exists (i,t) \in [n] \times [H] : I_{(i,t)}(\hat{\pi}) = 1 \,\land\, V_{(i,t)} > r_{(i,t)} \right) \Rightarrow A_{(i,t)}(\hat{\pi}) = 1\,.
    \]
    By the first bullet of Definition \ref{def:proxy-strategy}, $I_{(i,t)}(\hat{\pi}) = 1$ if and only if $i \in S(\pi)$ and $t_i(\pi) = t$. This happens only if the condition on Line \ref{strategy:main:opened} of $\MainStrategy$ for this pair of $(i,t)$ is evaluated to \textbf{true}. Therefore, $r_{(i,t)} > \tau$. Since, by assumption, we additionally have that $V_{(i,t)} > r_{(i,t)}$, we can conclude that $V_{(i,t)} > \tau$. However, this implies that the condition on Line \ref{strategy:main:accepted} of $\MainStrategy$ is evaluated to \textbf{true} and therefore, strategy $\pi$ halts (by Line \ref{strategy:main:halt}). By the third bullet of Definition \ref{def:proxy-strategy}, strategy $\hat{\pi}$ halts as well. Note that, for each box $j \in S(\pi)$ with $t_j(\pi)< t$ it holds that $V_{(j, t_j(\pi))} \leq \tau < V_{(i, t)}$.  
    Therefore, $A_{(i,t)}(\hat{\pi})=1$ (the reward of box $(i,t)$ in $\InstanceConstraint$ is the maximum among inspected boxes), and the claim follows.
    \end{proof}
    
    Let $M$ be the matching returned by $\textsc{Alg}$ with input
    the function $f$ (as in \eqref{eq:submod_obj}) and the hypergraph
    $\mathcal{H}(\Instance)$, and let $i^* =: i^*(\pi)$ be the random variable denoting the box inspected at time $t^*=T_{\pi}-p_{i^*}$. Clearly, by Line \ref{strategy:main:opened} and by Line \ref{strategy:main:accepted}, it holds that $Y_{(i^*, t^*)}= \min \left\{r_{(i^*, t^*)}, V_{(i^*, t^*)}\right\} > \tau$ 
    and $Y_{(i_j, t_j)}\leq V_{(i_j, t_j)}\leq \tau$ for all $j$ such that $e(i_j, t_j) \in M$ and $i_j \leq i^*$.
    
    Furthermore, $i^*(\pi)$ is the only box in $S(\pi)$ for which this holds. By invoking Claim \ref{claim:kleinberg-equality-general}, we can continue \eqref{eq:before-kleinberg-general} as follows:
    \begin{align*}
         \EX[u_{\Instance}(\pi)] &= \EX\bigg[\sum_{(i,t) \in [n] \times [H]} A_{(i,t)}(\hat{\pi}) Y_{(i,t)} \bigg]=\EX\left[Y_{(i^*,t^*)}\right]=\EX\left[Y(e(i^*,t^*))\right]\\
         &\geq\frac{1}{2} \cdot \EX\Big[\max_{e(i,j) \in M}Y_{e(i,j)}\Big] \geq \frac{1}{2\alpha} \cdot \max_{M \in \mathcal{M}(\Instance)}  \EX\Big[\max_{e(i,j) \in M}Y\left(e(i,j)\right)\Big]\,.
    \end{align*}
    The first inequality follows by observing that $\EX\left[Y(e(i^*,t^*))\right]$ equals the expected value of the gambler for the instance $\InstanceProphet=(Y_{e(i,j)})_{e(i,j) \in M}$; therefore, the inequality holds due to Lemma~\ref{lemma:prophet-inequality}. Finally, the second inequality follows from the fact that the matching $M$ is the solution of an $\alpha$-approximation algorithm for this precise objective. The proof follows.
\end{proof}

Combining Lemmata \ref{lemma:ub-sm-maximization} and \ref{lemma:u-approx-sm}, we directly get the following general result that relates the guarantee of our order-non-adaptive strategy with the quality of approximation we can achieve for Submodular Block Matching.

\begin{corollary}\label{cor:main-approx-star}
    Let $\Instance$ be an instance of Pandora's Box Over Time, $\pi^*$ be an optimal strategy for $\Instance$, and  $\textsc{Alg}$ be an $\alpha$-approximation algorithm for Submodular Block Matching. It holds that
    \(
        \EX[u_{\Instance}(\MainStrategy)] \geq \frac{1}{4\alpha} \cdot \EX\left[u_{\Instance}(\pi^*)\right]\,.
    \)
\end{corollary}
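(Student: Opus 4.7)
The proof is essentially a one-line chaining of Lemmata \ref{lemma:ub-sm-maximization} and \ref{lemma:u-approx-sm}, both of which bound the relevant quantities in terms of the same intermediate object, namely
\[
\Phi(\Instance) \;:=\; \max_{M \in \mathcal{M}(\Instance)}  \EX\Bigl[\max_{e(i,j) \in M}Y\bigl(e(i,j)\bigr)\Bigr],
\]
the optimal value of the Submodular Block Matching instance induced by $\Instance$ via \eqref{eq:submod_obj}.

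The plan is as follows. First, I would invoke Lemma \ref{lemma:u-approx-sm} to obtain a lower bound on the expected utility of $\MainStrategy$ in terms of $\Phi(\Instance)$, namely
\[
\EX[u_{\Instance}(\MainStrategy)] \;\geq\; \frac{1}{2\alpha} \cdot \Phi(\Instance),
\]
which already builds in both the factor of $2$ loss coming from the threshold-based prophet inequality (Lemma \ref{lemma:prophet-inequality}) and the factor of $\alpha$ loss from calling \textsc{Alg} inside $\MainStrategy$ on the Submodular Block Matching instance. Second, I would invoke Lemma \ref{lemma:ub-sm-maximization} to bound the optimum from above, yielding
\[
\EX[u_{\Instance}(\pi^*)] \;\leq\; 2\cdot \Phi(\Instance),
\]
which packages the loss of $2$ coming from the adaptivity gap bound of \citet{BradacSZ19} applied after the reduction to the proxy instance $\InstanceConstraint_{\textsc{proxy}}$ and the use of Lemma \ref{lemma:kleinberg-lemma}. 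Chaining the two inequalities gives
\[
\EX[u_{\Instance}(\MainStrategy)] \;\geq\; \frac{1}{2\alpha} \cdot \Phi(\Instance) \;\geq\; \frac{1}{2\alpha} \cdot \frac{1}{2} \cdot \EX[u_{\Instance}(\pi^*)] \;=\; \frac{1}{4\alpha} \cdot \EX[u_{\Instance}(\pi^*)],
\]
which is exactly the claimed bound.

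There is essentially no obstacle here beyond verifying that the quantity $\Phi(\Instance)$ appearing in the two lemmas is literally the same object, which it is by construction: both statements refer to the maximum of $\EX[\max_{e(i,j)\in M} Y(e(i,j))]$ over $M \in \mathcal{M}(\Instance)$, with the same hypergraph $\mathcal{H}(\Instance)$ of Definition \ref{def:hypergraph-def} and the same truncated variables $Y_{(i,j)} = \min\{V_{(i,j)},r_{(i,j)}\}$ from the proxy instance. Since both lemmas have already been proved, the corollary follows with no further work.
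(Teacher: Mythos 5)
Your proposal is correct and matches the paper's argument exactly: the corollary is obtained by chaining Lemma \ref{lemma:u-approx-sm} with Lemma \ref{lemma:ub-sm-maximization} through the common quantity $\max_{M \in \mathcal{M}(\Instance)} \EX[\max_{e(i,j)\in M} Y(e(i,j))]$, yielding the $\frac{1}{4\alpha}$ factor. Nothing further is needed.
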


Of course, Corollary \ref{cor:main-approx-star} is a conditional version of our Theorem \ref{thm:general_result}. The proof of the latter follows, by using in Corollary \ref{cor:main-approx-star}  the $5.32$-approximation algorithm for Submodular Block Matching of Theorem \ref{cor:block_matching}.

\section{Three Natural Special Cases}\label{sec:special-cases}

Our Pandora's Box Over Time problem is, by design, very general. This is clear from the discussion on all the different variants of related problems it captures as special cases. It is reasonable to expect that there are several other meaningful restricted versions of Pandora's Box Over Time worth studying. Here we turn to three such special cases; one where the processing times are all zero, one in which boxes are available only at certain time slots and the costs do not change over time,\footnote{As we mention in the Introduction and we discuss in detail in Section \ref{subsec:timeslots} below, technically this is not a special case of Pandora's Box Over Time. Yet, it does reduce to a special case Pandora's Box Over Time in a natural way when extreme costs are used to simulate the box unavailability constraints.} and another one where neither the cost nor the distribution of each box changes over time. All restrictions align with the majority of the related literature, and in all three cases, we can significantly improve over Theorem \ref{thm:general_result}.

\subsection{Pandora's Box Over Time With Instant Inspection}\label{subsec:instant}
Although the use of processing times was suggested by \citet{Weitzman79} 
in the paper that introduced the Pandora's Box problem, their effect has not been studied before. It is true that the varying processing times create technical complications, forcing us to work with matchings in non-uniform hypergraphs. Removing the processing times is a very natural restriction of Pandora's Box Over Time, resulting in what we call \emph{Pandora's Box Over Time With Instant Inspection}. The general instance of this problem is $\Instance = \left(\bar{c}_i, 0, (V_{it})_{t \in [H]}, \bar{v}_i\right)_{i \in [n]}$.

\citet{AmanatidisFRT24} showed that it is possible to efficiently compute a $(8 + \varepsilon)$-approximation for this special case. The main observation there was that $\mathcal{H}(\Instance)$  is not a hypergraph anymore, but a bipartite graph instead. In this case, Submodular Block Matching becomes the much better understood Submodular Matching on Bipartite Graphs problem. So, instead of Theorem \ref{cor:block_matching}, one could rather invoke the $(2+\delta)$-approximation algorithm of \citet{LeeSV10} for maximizing a submodular function subject to the intersection of two matroids (bipartite matching being a special case of that). However, specifically for maximizing the objective $f(\cdot)$ of \eqref{eq:submod_obj} subject to a bipartite matching constraint, here we show how to obtain a $(4+\varepsilon)$-approximation. To do so, in place of invoking the algorithm of \citet{LeeSV10}, we use the next lemma.

\begin{restatable}{lemma}{stochastic}\label{lemma:exp-max-matching}
For any $\delta > 0$, there is a polynomial time $(1+\delta)$-approximation algorithm for the stochastic problem of maximizing $f(\cdot)$ of \eqref{eq:submod_obj} when $\mathcal{H}(\Instance)$ is a graph.
\end{restatable}

Lemma \ref{lemma:exp-max-matching} follows from a general framework for obtaining randomized polynomial time approximation schemes due to \citet{ChenHLLLL16} and \citet{LiD19}. For details, see Appendix \ref{app:PTAS}.

\begin{theorem}
\label{theorem:instant-proc-main}
Fix any constant $\varepsilon>0$. There exists a strategy for Pandora's Box Over Time With Instant Inspection (a variant of $\MainStrategy$ (Strategy \ref{strategy:main})) which can be computed efficiently and provides a $(4+\varepsilon)$-approximation to an optimal strategy.
\end{theorem}
\begin{proof}
Let $\Instance$ be an instance of Pandora's Box Over Time With Instant Inspection. In this case, $\MainStrategy$ in Phase 1 would construct the \textit{graph} $\mathcal{H}(\Instance)$ and run the $(1+\delta)$-approximation algorithm of Lemma \ref{lemma:exp-max-matching} for $\delta = \nicefrac{\varepsilon}4$, to obtain the matching $M$. From that point on, everything is analogous to before.
Combining Corollary \ref{cor:main-approx-star} with this $(1+ \nicefrac{\varepsilon}4)$ factor, completes the proof.
\end{proof}

\subsection{The Pandora's Box Problem With Time Slots}\label{subsec:timeslots}

Even though instances of Pandora's Box Over Time are not coupled with constraints on the sequences of inspected boxes, our framework is rich enough to capture a range of problems with feasibility constraints. In this section, we highlight this by focusing on the class of instances in which boxes are available for inspection only at certain rounds. This setting is a generalization of the one studied by \citet{BergerEFF24}.

Formally, there are $n$ boxes and $H$ rounds. Each box $i \in [n]$, in addition to having reward distributions $D_{it}$ for $t = 1, \dots, H$, an inspection cost $c_i$, and a discounting function $\bar{v}_i$,\footnote{In the original definition of the Pandora's Box With Time Slots problem by \citet{BergerEFF24} there was also the notion of a deadline per box $i$ (which is already captured by $\Phi_i$). Furthermore, there was no discounting and, therefore, we slightly generalize the definition here.} is also associated with a set of time slots $\Phi_i \subseteq [H]$, which are the only rounds for which box $i$ is available for inspection. We call this class of problems \emph{Pandora's Box With Time Slots}. An instance of this problem is characterized by
$\InstanceInterval = \big(c_i, \left(V_{it}\right)_{t \in [H]}, \Phi_i, \bar{v}_i\big)_{i \in [n]},
$ and a strategy $\pi_{\textsc{sl}}$ is feasible for $\InstanceInterval$ if, for any random tuple of inspected boxes $(i_1, \dots, i_k)$ generated by $\pi_{\textsc{sl}}$, the condition
$t_{i_j}(\pi) \in \Phi_{i_j}$ for $j = 1, \dots, k$ is satisfied.

It is not hard to see that instance $\InstanceInterval$ can be implicitly captured by our framework. In particular, for every instance $\InstanceInterval$, we can construct an auxiliary instance $\Instance=\big(\bar{c}_i, 0, (V_{it})_{t \in [H]}, \bar{v}_i\big)_{i \in [n]}$ 
 with
\begin{equation*}
    \bar{c}_i(t) =
\begin{cases}
c_i & \text{if } t \in\Phi_i, \\
M & \text{otherwise,}
\end{cases}
\end{equation*}
where $M$ is a prohibitively high inspection cost for any box (e.g., $M > \displaystyle\max_{i \in [n]}\displaystyle\max_{t \in [H]}\EX[V_{it}]$).

Lemma \ref{lemma:time-slots-equiv} encodes the easy equivalence between an instance $\InstanceInterval$ and its auxiliary instance $\Instance$ and will be important for the proof of Theorem \ref{theorem:time-slots-main} that follows.
\begin{lemma}\label{lemma:time-slots-equiv}
   Let $\InstanceInterval$ be an instance of Pandora's Box With Time Slots and $\Instance$ be the auxiliary Pandora's Box Over Time instance.
   The following statements are true.
   \begin{enumerate}[i.]
       \item Every strategy $\pi$ for $\Instance$ that never inspects a box $i$ in a round $t \not\in \Phi_i$ is \emph{equivalent} to a feasible strategy $\pi_{\textsc{sl}}$ of $\InstanceInterval$ and vice versa.
       \item For every strategy $\pi$ for $\Instance$, we can construct a strategy $\pi'$ for $\Instance$ that never inspects a box $i$ in a round $t \not\in \Phi_i$ and \emph{weakly dominates} $\pi$ for every sequence of boxes $S(\pi)$ generated by $\pi$ i.e., $\EX[u_{\Instance}(\pi')] \geq \EX[u_{\Instance}(\pi)].$
   \end{enumerate}
\end{lemma}
\begin{theorem}
\label{theorem:time-slots-main}
Fix any constant $\varepsilon>0$. There exists a strategy for Pandora's Box With Time Slots (a variant of $\MainStrategy$ (Strategy \ref{strategy:main})) which can be computed efficiently and provides a $(4+\varepsilon)$-approximation to an optimal strategy.
\end{theorem}

\begin{proof}
Fix an instance $\InstanceInterval$, and let $\pi_{\textsc{sl}}^*$ be an optimal strategy for $\InstanceInterval$. Let $\Instance$ denote the auxiliary Pandora's Box Over Time instance defined above. 
Since $\Instance$ is such that $p_i = 0$ for each box $i \in [n]$, we can approximate an optimal strategy $\pi^*$ for it using the variant of $\MainStrategy$ (Strategy \ref{strategy:main}) described in Section \ref{subsec:instant}. Additionally, consider a strategy $\pi'_{\textsc{main}}$, which weakly dominates $\MainStrategy$, as described in Lemma \ref{lemma:time-slots-equiv} (statement ii.). We have:
\begin{align*}
    \EX[u_{\InstanceInterval}(\pi_{\textsc{sl}}^*)] \leq \EX[u_{\Instance}(\pi^*)] \leq (4+\varepsilon) \EX[u_{\Instance}(\pi_{\textsc{main}})] \leq (4+\varepsilon) \EX[u_{\Instance}(\pi'_{\textsc{main}})].
\end{align*}
The first inequality follows from Lemma \ref{lemma:time-slots-equiv} (statement i.), which states that the optimal strategy $\pi_{\textsc{sl}}^*$ for $\InstanceInterval$, being feasible, is equivalent to a strategy for $\Instance$, the expected utility of which is upper-bounded by that of $\pi^*$. The second inequality is due to Theorem \ref{theorem:instant-proc-main}, and the final inequality follows from Lemma \ref{lemma:time-slots-equiv} (statement ii.). This completes the proof of the theorem since Lemma \ref{lemma:time-slots-equiv} (statement i.) implies that the strategy $\pi'_{\textsc{main}}$, which by construction never inspects a box $i$ in a round $t \not \in \Phi_i$, is equivalent to a feasible strategy for $\InstanceInterval$ with an identical expected utility for the decision maker.
\end{proof}

\subsection{Pandora's Box Over Time With Constant Costs and Distributions}

The aspects of Pandora's Box Over Time that give it the most flexibility are probably the time-dependent costs and distributions. Just by varying the cost functions accordingly, it is easy to simulate a large number of scenarios, like deadlines, time windows, and knapsack constraints with respect to time. It is thus reasonable to consider the simpler class of instances where boxes still have processing times and the values of inspected boxes may degrade over time, but both the costs and the distributions of the rewards are not time-dependent. We call this restriction \emph{Pandora's Box With Value Discounting}.
Formally, for each $i \in [n]$ and each $t \in[H]$, we have that $D_{it}=D_i$. Furthermore, for each $i \in [n]$ and each $t \in \mathbb{Z}_{>0}$, we have $\bar{c}_i(t) = c_i$. That is, we consider instances of the form $\Instance = \left(c_i, p_i, V_i , \bar{v}_{i}\right)_{i\in [n]}$.

\begin{theorem}\label{theorem:discounting-main}
There exists a strategy for Pandora's Box With Value Discounting (see $\FixedCostStrategy$ (Strategy \ref{strategy:fixed-costs})) which can be computed efficiently and provides a $1.37$-approximation to an optimal strategy.
\end{theorem}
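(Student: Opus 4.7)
The plan is to design $\FixedCostStrategy$ as a direct analogue of $\MainStrategy$, but using the free-order prophet inequality of \citet{BubnaC23} in place of the submodular-matching machinery, since without time-dependence the exploration schedule degenerates to simply choosing an order of inspection. Concretely, compute the Weitzman reservation values $r_i$ (solving $\EX[(V_i - r_i)^+] = c_i$) and the thresholded random variables $Y_i = \min(V_i, r_i)$. Then apply the free-order prophet algorithm of \Cref{lemma:prophet-inequality} to $(Y_i)_{i\in[n]}$, obtaining a permutation $\hat{\sigma}$ and a threshold $\tau$ such that the gambler's reward is at least $0.7258 \cdot \EX[\max_i Y_i]$. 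Finally, process the boxes in the order prescribed by $\hat{\sigma}$: at each step, skip box $i$ if $r_i < \tau$; otherwise inspect it (waiting out its processing time $p_i$), and halt immediately upon seeing a realized value $V_i \geq \tau$.

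For the upper bound on $\opt$, observe that because costs and distributions are time-invariant and the discounting function $\bar{v}_i$ is non-increasing in the elapsed time, the utility of any strategy $\pi$ satisfies
\[
u_{\Instance}(\pi) \;\le\; \max_{i \in S(\pi)} V_i \;-\; \sum_{i \in S(\pi)} c_i\,,
\]
which is precisely the utility in the \emph{unconstrained} Constrained Pandora instance $\left((c_i, V_i)_{i\in[n]}\right)$. Applying \Cref{lemma:kleinberg-lemma} to this instance (with $\mathcal{F}$ being all sequences), and using that at most one $A_i$ equals $1$, gives $\EX[u_{\Instance}(\pi^*)] \le \EX\bigl[\max_{i\in[n]} Y_i\bigr]$.

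For the lower bound on $\FixedCostStrategy$, the key point is that when our strategy halts at box $i^*$, no discounting is applied to $V_{i^*}$ (time since inspection is $0$), and every previously inspected box has $V_i < \tau \le V_{i^*}$, so after discounting they remain dominated by $V_{i^*}$; thus the collected reward is exactly $V_{i^*}$. Since we accept whenever $V_i \ge \tau$ on an inspected box, and we only inspect when $r_i \ge \tau$, the condition of \Cref{lemma:kleinberg-lemma} is satisfied: any inspection with $V_i \ge r_i$ gives $V_i \ge r_i \ge \tau$ and hence acceptance. Therefore the equality case of \Cref{lemma:kleinberg-lemma} applies to the proxy Constrained Pandora instance (the unconstrained one of the previous paragraph), yielding $\EX[u_{\Instance}(\FixedCostStrategy)] = \EX[Y_{i^*}]$. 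By the choice of $\hat{\sigma}$ and $\tau$ from \Cref{lemma:prophet-inequality}, $\EX[Y_{i^*}] \ge 0.7258 \cdot \EX[\max_i Y_i] \ge 0.7258 \cdot \EX[u_{\Instance}(\pi^*)]$, and $1/0.7258 < 1.38$, which (after the slightly tighter form of the Bubna--Chawla bound) gives the claimed $1.37$-approximation.

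The main obstacles are largely conceptual rather than technical: first, arguing cleanly that the discounting disappears for the final box and is harmless for earlier inspected ones (so the realized utility is $V_{i^*} - \sum c_i$ without any loss), and second, verifying that our threshold-based halting rule satisfies the Kleinberg acceptance condition on the unconstrained proxy instance so that inequality \eqref{eq:kleinberg-ineq} tightens to an equality. Once these two reductions are in place, the approximation ratio is obtained by a direct invocation of the free-order prophet bound on the random variables $(Y_i)_i$, and no submodular-optimization or contention-resolution machinery is needed.
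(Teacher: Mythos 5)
Your proposal is correct and follows essentially the same route as the paper: the same upper bound $\EX[u_{\Instance}(\pi^*)] \le \EX[\max_i Y_i]$ via monotone discounting and Lemma \ref{lemma:kleinberg-lemma}, the same threshold-based strategy driven by the free-order prophet permutation of Lemma \ref{lemma:prophet-inequality}, and the same equality argument (discounting vanishes on the final box, earlier boxes stay dominated, and the Kleinberg acceptance condition holds) to get $\EX[u_{\Instance}(\FixedCostStrategy)] = \EX[Y_{\sigma(i^*)}] \ge 0.7258\cdot\EX[\max_i Y_i]$. This matches Lemmata \ref{lemma:opt-prophet-ub} and \ref{lemma:u-is-fo} and the paper's proof of Theorem \ref{theorem:discounting-main}.
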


 We begin by upper-bounding the optimal expected utility. To this end, recall the definition of the random variables $Y_i = \min\{V_i,r_i\}$, where $r_i$ is the reservation value of the $i$-{th} box.

\begin{lemma}\label{lemma:opt-prophet-ub}
    Let $\Instance$ be an instance of Pandora's Box With Value Discounting. Then, $\EX[u_{\Instance}(\pi^*)] \leq \EX[\max_{i \in [n]} Y_i]$.
\end{lemma}
\begin{proof}
    Consider the instance $\InstanceConstraint = \left((c_i, V_i)_{i \in [n]}\right)$ of the Pandora's Box problem. We have:
    \begin{align*}
        \EX[u_{\Instance}(\pi^*)] &= \EX\bigg[\max_{i \in S(\pi^*)} \bar{v}_i\left(V_i, T_{\pi^*} - t_{i}(\pi^*)-p_i \right) - \!\sum_{i \in S(\pi^*)} \!c_i\bigg] \\
        &\leq \EX\bigg[\max_{i \in S(\pi^*)} V_i - \!\sum_{i \in S(\pi^*)} \!c_i \bigg]
        = \EX\left[u_{\hat{\Instance}}(\pi^*)\right]  \\
        &\leq \EX\bigg[\sum_{i=1}^n A_i(\pi^*) Y_i \bigg]\leq \EX\Big[\max_{i \in [n]} Y_i \Big]\,.
    \end{align*}
    The first equality follows from \eqref{eq:utility} for $(\Instance, \pi^*)$ (simplified for this special class of instances). The first inequality follows from the fact that $\bar{v}_i$ as a function of its second argument only, attains a maximum at $0$. The subsequent equality follows from \eqref{eq:utility-constrained}, noting that $\pi^*$ is a feasible strategy for $\InstanceConstraint$. Then, we apply Lemma \ref{lemma:kleinberg-lemma} for $(\InstanceConstraint, \pi^*)$, and therefore the second-to-last inequality holds due to \eqref{eq:kleinberg-ineq}. The lemma follows since $\pi^*$ may ``collect'' the reward of at most one box for each observed $\vec{V} \sim \vec{D}$, i.e., $\sum_{i=1}^n A_i(\pi^*) \leq 1$ holds by the definition of the problem. 
\end{proof}

\noindent
    An interpretation of Lemma \ref{lemma:opt-prophet-ub} is that the optimal expected utility of any instance $\Instance$ is upper bounded by the optimal value achieved by the prophet for the instance $\InstanceProphet = (Y_i)_{i \in [n]}$. Working similarly to Section \ref{sec:general}, we design an order-non-adaptive threshold-based strategy for the problem. However, unlike the general case, here the ``proxy'' instance related to our Pandora's Box Over Time instance is simply an instance of (the unconstrained) Pandora's Box problem. This constraint-free environment allows us to rely on a Free-Order Prophet Inequality instance to generate an order of the boxes, and this drastically improves performance.
\begin{strategy}[ht]
\DontPrintSemicolon
\caption{\FixedCostStrategy\label{strategy:fixed-costs}}
{
\nonl $\hspace{-2.3ex}\rhd$ {\bf{Input:}} An instance $\Instance = \left(c_i, p_i, V_i, \bar{v}_i\right)_{i \in [n]}$ of Pandora's Box} With Value Discounting \;
\tcp{Phase 1: determine schedule of inspection and thresholds}
Construct $\InstanceProphet=(Y_i)_{i \in [n]}$. \label{line:Yi}\\
Obtain a permutation $\sigma$ of $[n]$ and thresholds $(\tau_{i})_{i \in [n]} \in \mathbb{R}^n_{>0}$ that achieve the guarantee of Lemma \ref{lemma:fo-prophet-inequality} 
for $\InstanceProphet$. \label{line:sigma}\\
\tcp{Phase 2: threshold-based strategy using the schedule and thresholds of Phase 1}
Set $t=1$\\
\For{$i=1,\dots, n$}{
    \If{$r_{\sigma(i)} > \tau_{\sigma(i)}$}{\label{strategy:fixed:opened}
        Sample $V_{\sigma(i)} \sim D_{\sigma(i)}$ at time $t$.\\
        \If{$V_{\sigma(i)} > \tau_{\sigma(i)}$} {\label{strategy:fixed:accepted}
            \textbf{halt}\ \tcp*{the reward of the last inspected box is collected}
        }
        Set $t=t+1+p_{\sigma(i)}.$  \tcp*{wait for $p_{\sigma(i)}$ rounds}
    }
}
\end{strategy}

We next show that the expected utility of $\FixedCostStrategy$ for an instance $\Instance = (c_i, p_i, V_i, \allowbreak \bar{v}_i)_{i \in [n]}$ is at least the expected value achieved by the ``gambler'' for the related instance $\InstanceProphet = (Y_i)_{i \in [n]}$.

\begin{lemma}\label{lemma:u-is-fo}
    Let $\Instance$ be an instance of the Pandora's Box With Value Discounting,  and $\InstanceProphet$, $\sigma$ be as in  \Cref{line:Yi,line:sigma} of $\FixedCostStrategy$. Then, $\EX[u_{\Instance}(\FixedCostStrategy)] \geq \EX[Y_{i^*_{\sigma}}]$.
\end{lemma}
\begin{proof}
    We use $\pi := \FixedCostStrategy$ for brevity. Let ${v_i}': \mathbb{R}_{\geq 0} \times \mathbb{N} \mapsto \mathbb{R}_{\geq 0}$ be such that $v_i'(x, 0)= x$ for all $x \geq 0$ and $v'_i(\cdot, t)=0$ for all $t>0$. Further let $\Instance'=(c_i, p_i, V_i, \allowbreak v'_i)_{i \in [n]}$.\footnote{It is straightforward to see that $\Instance'$ is equivalent to an instance of Pandora's Box With Commitment, i.e., the version of Pandora's Box in which only the last box is collected (see, e.g., \citet{FuLX18}). This observation is crucial, as we later invoke Lemma \ref{lemma:kleinberg-lemma} for $(\Instance', \pi)$.} By the definition of $v_i'(\cdot)$, $\bar{v}_i(x,t) \geq v'_i(x,t)$ holds for every $x \geq 0$ and every t $\in \mathbb{N}$. Therefore,
    \begin{align*}
        \EX[u_{\Instance}(\pi)] &= \EX\bigg[\max_{i \in S(\pi)} \bar{v}_i\left(V_i, T_{\pi} - t_i(\pi)-p_i \right) - \sum_{i \in S(\pi)} c_i \bigg] \\ 
        &\geq \EX\bigg[\max_{i \in S(\pi)} v_i'\left(V_i, T_{\pi} - t_i(\pi) -p_i\right) - \sum_{i \in S(\pi)} c_i \bigg]  = \EX[u_{\Instance'}(\pi)]\,.\numberthis \label{eq:utility-equal-fixed}
    \end{align*} 
    For each $i \in [n]$, let $A_i(\pi)$ and $I_i(\pi)$ be the indicator random variables that signify whether the reward of box $i$ is collected and whether the box is opened in $\Instance'$. In Claim \ref{claim:kleinberg-equality-fixed}, we show that $(\Instance', \pi)$ is such that \eqref{eq:kleinberg-ineq} holds with equality.

    \begin{claim}\label{claim:kleinberg-equality-fixed}
        It holds that
           $\EX[u_{\Instance'}(\pi)] = \EX\big[\sum_{i=1}^n A_i(\pi) Y_i \big]$.
    \end{claim}

    \begin{proof}
    According to Lemma \ref{lemma:kleinberg-lemma}, this identity holds if the following condition is satisfied for $\big(\Instance', \pi \big)$:
    $\left( \exists i \in [n] : I_i(\pi) = 1 \land V_i > r_i \right) \Rightarrow A_i(\pi) = 1$.
    Indeed, by Line \ref{strategy:fixed:opened} of $\pi$, $I_{\sigma(i)}(\pi) = 1$ holds for a box $\sigma(i) \in [n]$ if $r_{\sigma(i)} > \tau_{\sigma(i)}$. If we additionally have that $V_{\sigma(i)} > r_{\sigma(i)}$, which holds if the condition on Line \ref{strategy:fixed:accepted} is evaluated as \textbf{true}, the strategy halts. By the construction of $\Instance'$ (and, in particular, of $v'(\cdot)$), only the box $\sigma(i)$ is such that $V_{\sigma(i)} > 0$. Hence, $A_{\sigma(i)}(\pi) = 1$, and the claim follows.      
    \end{proof}
    By invoking Claim \ref{claim:kleinberg-equality-fixed}, we can continue \eqref{eq:utility-equal-fixed} as follows:
    \begin{equation*}
         \EX[u_{\Instance'}(\pi)] = \EX\bigg[\sum_{i=1}^n A_i(\pi) Y_i \bigg] = \EX[Y_{i^*_\sigma}]\,.
    \end{equation*}
    The last equality follows from the fact that $A_i(\pi) = 1$ holds for the \emph{first} box $\sigma(i)$ for which $r_{\sigma(i)} > \tau_{\sigma(i)}$ (Line \ref{strategy:fixed:opened}) and $V_{\sigma(i)} > \tau_{\sigma(i)}$ (Line \ref{strategy:fixed:accepted}). This holds if and only if $Y_{\sigma(i)} = \min\{r_{\sigma(i)}, V_{\sigma(i)}\} > \tau_{\sigma(i)}$ and $Y_{\sigma(j)} \leq V_{\sigma(j)} \leq  \tau_{\sigma(j)}$ for all $j$ such that $j< i$. This is the definition of the accepted box $i^*_{\sigma}$ for $\InstanceProphet$. The proof follows.
\end{proof}

Using the two lemmata, it is now easy to show our main result about the Pandora's Box With Value Discounting.

\begin{proof}[Proof of Theorem \ref{theorem:discounting-main}]
    Fix an arbitrary instance $\Instance$ of the Pandora's Box With Value Discounting. Let $\InstanceProphet = (Y_i)_{i \in [n]}$ be its induced Free-Order Prophet Inequality instance with $Y_i = \min\{r_i, V_i\}$ for $i \in [n]$. For $(\Instance, \FixedCostStrategy)$, the following holds, completing the proof:
        $\EX\left[u_{\Instance}(\FixedCostStrategy)\right] \geq \EX\left[Y_{i^*_{\sigma}}\right] \geq 0.7258 \cdot \EX\Big[\max_{i \in [n]} Y_i \Big] \geq 0.7258 \cdot \EX\left[u_{\Instance}(\pi^*)\right]$. 
    The first inequality follows from Lemma \ref{lemma:u-is-fo}, the second inequality follows from Lemma \ref{lemma:prophet-inequality}, and the last inequality follows from Lemma~\ref{lemma:opt-prophet-ub}. 
\end{proof}

\section{Conclusions}
The original Pandora's Box problem, proposed by \citet{Weitzman79}, included a specific discounting factor for costs and rewards. More recently, starting with the work of \citet{KleinbergWW16} and subsequent papers, this dependence on time has been omitted, with the exception of the works discussed in our Introduction. In this work, we introduce Pandora's Box Over Time, a class of online selection problems that generalizes the Pandora's Box problem (including the discounting proposed by \citet{Weitzman79}) to scenarios where time plays a richer role, capturing all the time-related extensions that have appeared in the literature. For this NP-Hard problem, we devise a constant-approximation strategy that exhibits a simple two-phase structure: first, it determines a non-adaptive inspection schedule and then computes an adaptive stopping rule for the scheduled boxes. This design template, which crucially exploits the adaptivity gap result of \citet{BradacSZ19}, is the predominant approach in the literature.

Whether it is possible to improve our approximation guarantees, potentially by going beyond the above design template, remains an open question for future research. Another interesting open question is whether we can rule out the existence of a PTAS for general Pandora's Box Over Time instances or any of the special cases we study in Section \ref{sec:special-cases}. Hardness of approximation results, such as that of \citet{BoodaghiansFLL20}, are rather scarce in this line of work.

\section*{Acknowledgements}
GA has been partially supported by project MIS 5154714 of the National Recovery and Resilience Plan Greece 2.0 funded by the European Union under the NextGenerationEU Program.

\medskip
\noindent
TE is supported by the Harvard University Center of Mathematical Sciences and Applications. 

\medskip
\noindent
MF has been partially funded by the European Research Council (ERC) under the European Union's Horizon 2020 research and innovation program (grant agreement No. 866132), by an Amazon Research Award, by the Israel Science Foundation Breakthrough Program (grant No.~2600/24), and by a grant from TAU Center for AI and Data Science (TAD).

\medskip
\noindent
FF has been partially supported by the FAIR (Future Artificial Intelligence Research) project
PE0000013, funded by the NextGenerationEU program within the PNRR-PE-AI scheme (M4C2, investment 1.3, line on Artificial Intelligence), by ERC Advanced Grant 788893 AMDROMA “Algorithmic and Mechanism Design Research in Online Markets”, and by PNRR MUR project IR0000013-SoBigData.it. 

\medskip
\noindent
AT was partially supported by the Gravitation Project NETWORKS, grant no. 024.002.003, and the EU Horizon 2020 Research and Innovation Program under the Marie Skłodowska-Curie Grant
Agreement, grant no. 101034253.

\appendix

\section{Missing Proofs}
\label{app:missing}

\subsection{Proof of Lemma \ref{lemma:kleinberg-lemma}}
\noindent For each $i \in [n]$, we write $I_i(\pi):= I_i$ and $A_i(\pi):= A_i$. We have:
\begin{align*}
     \EX\bigg[\bigg(\sum_{i=1}^n A_i V_i - I_i c_i \bigg)\bigg] &= \EX\bigg[\bigg(\sum_{i=1}^n A_i V_i - I_i \EX[(V_i - r_i)^+] \bigg)\bigg] \\
    &= \sum_{i=1}^n \Big(\EX\big[A_i V_i \big] - \EX\big[I_i (V_i - r_i)^+ \big] \Big)\\
    &= \sum_{i=1}^n \EX\big[A_i Y_i + A_i (V_i - r_i)^+ - I_i (V_i - r_i)^+ \big] \\
    &\leq \sum_{i=1}^n \EX\big[A_i Y_i + I_i (V_i - r_i)^+ - I_i (V_i - r_i)^+ \big] \\
    &= \sum_{i=1}^n \EX\big[A_i Y_i \big].
\end{align*}
The first equality follows from the definition of $r_i$. The second equality holds since, for $i \in [n]$, $V_i$ and $I_i$ are independent; whether a strategy opens box $i$ cannot be aﬀected by the reward in it since $D_1,\dots, D_n$ are independent. Then, the third equality follows from the definition of $(Y_i)_{i \in [n]}$. Finally, the inequality follows since $A_i \leq I_i$ holds for each box $i \in [n]$, i.e., if a reward is collected, then its box must have been already inspected.

For the second statement, notice that the above derivation is satisfied with equality when
\[
\EX\Big[(A_i - I_i)(V_i - r_i)^+ \Big] = 0 
\]
holds for each $i \in [n]$. Since, by definition, $A_i \leq I_i$, this holds if and only if the probability that $I_i = 1$, $A_i = 0$, and $V_i > r_i$ is $0$, and the lemma follows. \qed
\smallskip

\subsection{Proof of Lemma~\ref{lemma:f-monotone-submodular}}
 \noindent The non-negativity of the random variables directly implies that $f(S) \ge 0$, for any $S\subseteq [n]$, whereas the function is normalized by definition. Further, for every $S \subseteq T \subseteq [n]$, it holds that $f(S)= \,\EX\left[\max_{i \in S}Y_i\right] \leq \allowbreak \,\EX\left[\max_{i \in T}Y_i\right] =f(T)$, and therefore $f$ is monotone.
    
 It remains to show that $f$ is submodular. Fix $S, T \subseteq [n]$ and let $i^*=\argmax_{i \in S\cup T}Y_i$ (where ties are broken lexicographically, hence the use of `$=$' instead of `$\in$'). We have:
 \begin{align*} 
         f(S\cup T) &+ f(S \cap T) = \EX\Big[\max_{i \in S\cup T}Y_i\Big] + \EX\Big[\max_{i \in S\cap T}Y_i\Big]= \EX\Big[\max_{i \in S\cup T}Y_i + \max_{i \in S\cap T}Y_i  \Big]\\
         &=\Pr\left[i^* \in S \right] \cdot \EX\Big[\max_{i \in S\cup T}Y_i + \max_{i \in S\cap T}Y_i \mid i^* \in S\Big] \\
         &\quad +\Pr\left[i^* \in T \setminus S \right] \cdot \EX\Big[\max_{i \in S\cup T}Y_i + \max_{i \in S\cap T}Y_i \mid i^* \in T \setminus S\Big]\\
         &\leq \Pr\left[i^* \in S \right] \cdot \EX\Big[\max_{i \in S}Y_i + \max_{i \in T}Y_i \mid i^* \in S\Big] \\
         & \quad +\Pr\left[i^* \in T \setminus S \right] \cdot \EX\Big[\max_{i \in T}Y_i + \max_{i \in S}Y_i \mid i^* \in T \setminus S\Big]\\ 
        &=\EX\Big[\max_{i \in S}Y_i + \max_{i \in T}Y_i \Big]=\EX\Big[\max_{i \in S}Y_i\Big] + \EX\Big[\max_{i \in T}Y_i \Big]\\
        &=f(S)+f(T)\,, 
     \end{align*}
    which concludes the proof.
 \qed

\smallskip

\subsection{Proof of Lemma \ref{lemma:time-slots-equiv}}
\noindent The proof of the first statement is straightforward. Given a strategy $\pi$ (resp.~$\pi_{\textsc{sl}}$) for $\Instance$ (resp.~$\InstanceInterval$), we construct a strategy $\pi_{_{\textsc{sl}}}$ (resp.~$\pi$) for $\InstanceInterval$ (resp.~$\Instance$) that follows the exact same action at each round, i.e., $S(\pi) = S(\pi')$. Therefore, the two strategies have an identical action history and, by construction, achieve an identical utility for the decision maker.

To prove the second statement, let $\mathcal{Z}(\pi)$ be the set of random tuples of inspected boxes $(i_1, \dots, i_k)$ with $t_{i_j}(\pi) \in \Phi_{i_j}$ for $j \in \{1, \dots, k-1\}$ and $t_{i_k}(\pi) \not\in \Phi_{i_k}$, generated by $\pi$. Further, let $\pi'$ be a strategy for $\Instance$ that mimics $\pi$ for every sequence of boxes $S(\pi)$, except for those sequences with a \emph{prefix} in $(i_1, \dots, i_k) \in \mathcal{Z}(\pi)$. Instead, for each such sequence, $\pi'$ generates $(i_1, \dots, i_{k-1})$ and \emph{halts} in round $t_{i_k}(\pi)$ without inspecting any box in the interval $[t_{i_{k-1}}(\pi) + 1, t_{i_k}(\pi)]$. Denote by $\mathcal{E}$ this random event and by $\mathcal{E}^c$ its complement. When $\mathcal{E}$ occurs, we have:

\begin{align*}
    u_{\Instance}(\pi) &= \max_{i \in S(\pi)} \bar{v}_i\left(V_{it_i(\pi)}, T_{\pi} - t_i(\pi)\right) - \!\!\sum_{i \in S(\pi)} \!\bar{c}_i\left(t_i(\pi)\right) \\
    &\leq \!\max_{i \in (i_1, \dots, i_{k-1})} \!\!\bar{v}_i\left(V_{it_i(\pi)}, T_{\pi} - t_i(\pi)\right) + \max_{i \in [n]}\max_{t \in [H]} V_{it} - \bar{c}_i(t_{i_k})\\
    &\quad- \!\!\sum_{i \in (i_1, \dots, i_{k-1})} \!\!\bar{c}_i\left(t_i(\pi)\right) \\
    &= \!\max_{i \in (i_1, \dots, i_{k-1})} \!\!\bar{v}_i\left(V_{it_i(\pi)}, T_{\pi} - t_i(\pi)\right) - \!\!\!\sum_{i \in (i_1, \dots, i_{k-1})} \!\!\!\!\!\!c_i \, + \max_{i \in [n]} \max_{t \in [H]} V_{it} - M \\
    &= u_{\Instance}(\pi') + \max_{i \in [n]} \max_{t \in [H]} V_{it} - M. \numberthis\label{eq:when-E-happens}
\end{align*}

The first equality follows from \eqref{eq:utility}, noting that $p_i = 0$ for all $i \in [n]$ in the proxy instance $\Instance$, by construction. The inequality holds since 
\[\max_{i \in (i_1, \dots, i_{k-1})} \bar{v}_i\left(V_{it_i(\pi)}, T_{\pi} - t_i(\pi)\right) \geq 0 \]
and
\[
\max_{i \in S(\pi)} \bar{v}_i\left(V_{it_i(\pi)}, T_{\pi} - t_i(\pi)\right) \leq \max_{i \in S(\pi)} V_{it_i(\pi)} \leq \max_{i \in [n]} \max_{t \in [H]} V_{it}\,,
\]
and because $\bar{c}_i(t) \geq 0$ for all $i \in [n]$ and all $t \in [H]$. The second equality follows from the definition of $\Instance$ and the fact that $\mathcal{E}$ occurs. Finally, the last equality follows from the definition of $\pi'$. We conclude that
\begin{align*}
    \EX[u_{\Instance}(\pi)] &= \Pr[\mathcal{E}] \cdot \EX[u_{\Instance}(\pi) \mid \mathcal{E}] + \Pr[\mathcal{E}^c] \cdot \EX[u_{\Instance}(\pi) \mid \mathcal{E}^c] \\
    &= \Pr[\mathcal{E}] \cdot \EX[u_{\Instance}(\pi) \mid \mathcal{E}] + \Pr[\mathcal{E}^c] \cdot \EX[u_{\Instance}(\pi') \mid \mathcal{E}^c] \\
    &\leq \Pr[\mathcal{E}] \cdot \EX\Big[u_{\Instance}(\pi') + \max_{i \in [n]} \max_{t \in [H]} V_{it} - M \mid \mathcal{E}\Big] + \Pr[\mathcal{E}^c] \cdot \EX[u_{\Instance}(\pi') \mid \mathcal{E}^c] \\
    &= \EX[u_{\Instance}(\pi')] + \Pr[\mathcal{E}] \Big(\EX\Big[\max_{i \in [n]} \max_{t \in [H]} V_{it} \mid \mathcal{E}\Big] - \EX\Big[\max_{i \in [n]} \max_{t \in [H]} V_{it}\Big]\Big)\\
    &\leq \EX[u_{\Instance}(\pi')]\,.
\end{align*}
The second equality follows from the fact that $\pi'$ mimics all actions of $\pi$ under the event $\mathcal{E}^c$, by construction. The first inequality follows from \eqref{eq:when-E-happens}. The lemma follows.\qed

\section{Approximating Submodular Block Matching via Contention Resolution Schemes}\label{subsec:crs}

Recall that in the Submodular Block Matching problem, we are given a bipartite hypergraph with a specific structure: every hyperedge contains exactly one ``left'' node, and a sequence of consecutive ``right'' nodes. Our goal is to maximize a monotone submodular function over the hyperedges, while maintaining feasibility, i.e., no element of the bipartite ground set can belong to more than one hyperedge (for a formal definition, we refer to the main body).
By inspecting the problem, it is not hard to see that it is a monotone submodular maximization problem subject to a $p$-system\footnote{A down-closed feasibility set system is called a $p$-system if, for any pair of bases $B_1$ and $B_2$ (a basis is any maximal feasible set with respect to inclusion), it holds that $|B_i| \le p \cdot |B_j|$ for $i,j \in \{1,2\}.$} constraint, where $p = \min\{n, \max_{i \in [n]} p_i\}$. If we were to use a generic polynomial-time algorithm for this class of problems, it is known that we could not guarantee an approximation ratio better than $1/p$ \citep{BadanidiyuruV14}. 
Instead, we show in this section that there is a polynomial time $0.188$-approximation algorithm for the problem using the Contention Resolution Scheme (CRS) framework \citep{FeigeV06,FeigeV10,FeldmanNS11,ChekuriVZ14}. 
We need some additional preliminaries here. For the sake of presentation, these were omitted from Section \ref{sec:prelims}, as they are only needed in this subsection. 

A solution to Submodular Block Matching is a subset of hyperedges (which in this case must also be a matching). One can think of subsets of hyperedges as vectors in $\{0, 1\}^{|E|}$, living in $\mathbb{R}^{|E|}$. This way, it is possible to talk about fractional solutions, which in turn will be rounded through the CRS.
We will use $P$ to denote the convex hull of the vectors that correspond to feasible integral solutions, i.e., $P$ is the convex hull of all matchings. It is easy to see that $P$ is \textit{down-closed}, i.e., if $x \in P$ and $0\le y\le x$, then $y \in P$, and \textit{solvable}, i.e., linear functions can be maximized over $P$ in time polynomial in $|E|$.  
Next, consider a vector $x \in P$ 
(which in our case will be the output of the Measured Continuous Greedy \citep{FeldmanNS11} on the relaxation of the problem, i.e., maximizing the multilinear relaxation of the objective over $P$. 
As $x$ is typically fractional, the idea is to round each of its coordinates independently with probability equal to the value of that coordinate; for us, a hyperedge $e$ is chosen independently with probability $x_e$. This results to obtaining a random set $\mathtt{R}(x)\subseteq E$, which might still not be feasible. 
A CRS will randomly remove some hyperedges from $\mathtt{R}(x)$, so that we obtain a matching. Nevertheless, this cannot be done arbitrarily; we need the CRS to satisfy a number of properties that will allow the good approximation guarantees of $x$ to be transferred (in expectation) to the final rounded solution. Recall that, for $b\in \mathbb{R}$, $bP = \{bx \,|\, x\in P\}$ denotes the scaling of $P$ by $b$.

\begin{definition}
Let $b, c\in [0,1]$. A random function $\pi: bP \times \{0, 1\}^{|E|} \to \{0, 1\}^{|E|}$ (where we write $\pi_x(A)$ rather than $\pi(x, A)$) is a \emph{monotone $(b, c)$-balanced CRS} for $P$ if
\begin{itemize}
	\item for any $x\in bP$ and any $A\subseteq E$, it holds that $\pi_x(A)\subseteq A$ and it is feasible (i.e., $\pi_x(A)$ is a matching);
	\item for any $x\in bP$ and any $A\subseteq B\subseteq E$, it holds that $\Pr[e\in \pi_x(A)] \ge \Pr[e\in \pi_x(B)]$;
	\item for any $x\in bP$ and any $e\in E$, it holds that $\Pr[e\in \pi_x(\mathtt{R}(x))] \ge c\cdot x_e$. 
\end{itemize}
\end{definition}

The importance of monotone $(b, c)$-balanced CRSs stems from the fact that, when combined with a continuous algorithm with strong approximation guarantees, they result in approximation algorithms for maximizing the corresponding objectives over the integral
points of $P$. The next theorem follows from the work of \citet{FeldmanNS11} (full details in \citep{Feldman13}) and assumes that the fractional point $x\in bP$ is the output of the Measured Continuous Greedy algorithm; see also \citep{BuchbinderF18}. The \textit{density} of the polytope $P\subseteq [0, 1]^{|E|}$ is defined as $d(P) = \min_i \{b_i / \sum_{e\in E} a_{ie}\}$, where $\sum_{e\in E} a_{ie} x_e \le b_i$ is the $i$-th inequality constraint defining $P$ (excluding the inequalities $0 \le x_e \le 1$ for all $e \in E$). It is easy to see that $d(P) \in [0,1]$. 

\begin{theorem}[Follows from \citet{FeldmanNS11}]\label{thm:crs_framework}
Let $P\subseteq [0,1]^{|E|}$ be a solvable down-closed convex polytope with $d(P)<1-\varepsilon$ for some $\varepsilon>0$, and let $\pi$ be a monotone $(b, c)$-balanced CRS for $P$. Then, 
there is a 
$\nicefrac{e^b}{c(e^b - 1)}$-approximation
algorithm for maximizing a monotone submodular
function over the integral points of $P$ in polynomial time.
\end{theorem}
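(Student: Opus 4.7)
The plan is to combine a continuous algorithm for maximizing the multilinear extension $F$ of $f$ with the CRS $\pi$ to round the resulting fractional solution without losing too much value. I would organize the proof in three stages: fractional optimization, independent rounding, and contention resolution; the approximation ratio then comes from multiplying the multiplicative losses of the first and third stages.

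For the fractional stage, I would apply the Measured Continuous Greedy (MCG) algorithm of \citet{FeldmanNS11} to the multilinear extension $F$ of $f$ over $P$, but stopped at time $b$ instead of time $1$. The standard analysis yields a point $x \in bP$ with $F(x) \geq (1-e^{-b})\max_{y \in P}F(y) \geq (1-e^{-b})\cdot \opt_I$, where $\opt_I$ denotes the optimal integral objective value (which is an upper bound of the continuous maximum by monotonicity of $F$ on $P$). The density hypothesis $d(P)<1-\varepsilon$ is used precisely here to bound the width of the polytope, which in turn controls the sampling error when evaluating $F$ and its partial derivatives at each step of MCG via Chernoff-type concentration; this is what guarantees that MCG can be implemented in polynomial time while losing only a $(1-o(1))$ factor from its idealized guarantee.

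For the rounding stage, I would independently include each element $e\in E$ in a random set $\mathtt{R}(x)$ with probability $x_e$. By definition of the multilinear extension, $\EX[f(\mathtt{R}(x))]=F(x)$, but $\mathtt{R}(x)$ need not be feasible. I would then apply the monotone $(b,c)$-balanced CRS $\pi_x$ to $\mathtt{R}(x)$ to obtain a feasible set $S=\pi_x(\mathtt{R}(x))$. The central inequality to establish is
\[
\EX\bigl[f(S)\bigr]\;\geq\; c\cdot F(x).
\]
This is the standard ``CRS preservation lemma'': one writes $f(S)$ as a telescoping sum of marginals along an arbitrary ordering of $E$ and uses the monotonicity axiom of $\pi_x$ (which ensures that appearance probabilities behave monotonically with respect to the underlying random subset of $E$) together with the submodularity of $f$ to show that each marginal contribution survives the rounding up to a factor of $c$; the careful execution is carried out in \citet{Feldman13}.

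Combining the two guarantees yields $\EX[f(S)]\geq c(1-e^{-b})\cdot \opt_I = \tfrac{c(e^b-1)}{e^b}\cdot \opt_I$, which is exactly the claimed $e^b/[c(e^b-1)]$-approximation. The main technical obstacle is precisely the preservation inequality of the third stage: the interplay between monotonicity of the CRS, the per-element balance parameter $c$, and submodularity of $f$ is delicate, since one has to argue about \emph{conditional} survival probabilities of single elements under $\pi_x$ while summing marginals in a submodular fashion; this is exactly where the monotonicity axiom in the definition of a $(b,c)$-balanced CRS is indispensable.
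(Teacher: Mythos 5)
Your proposal is correct and follows exactly the standard Measured Continuous Greedy plus independent rounding plus contention-resolution argument of \citet{FeldmanNS11} (with details in \citep{Feldman13}), which is precisely the source the paper cites for this theorem without reproducing a proof. The only minor imprecision is your attribution of the density hypothesis $d(P)<1-\varepsilon$ purely to controlling sampling error; in the cited analysis it is a technical condition ensuring the Measured Continuous Greedy guarantee $F(x)\geq(1-e^{-b}-o(1))\cdot\opt$ holds as stated, but this does not affect the validity of your argument's structure or the final bound $c(e^b-1)/e^b$.
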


For our $P$, it is easy to bound  $d(P)$ away from $1$. Indeed, if $x_{ij}$ is the (relaxation of the) indicator variable of whether the hyperedge $e(i, j) = \{ b_i \} \cup \{ t_k \mid k = j, \dots, j + p_i \}$ is selected or not, the inequality constraints defining $P$ are $\sum_{j\in[H]} x_{ij} \le 1$, for all $i\in [n]$, and $\sum_{i\in[n]} \sum_{k = j}^{j+p_i} x_{ik} \le 1$, for all $j\in [H]$. Clearly, $d(P)\le 1/n$. Moreover, it is easy to obtain a monotone $(b, e^{-2b})$-balanced CRS for our $P$, for any $b\in[0,1]$; then, for $b = \ln{1.5}$, Theorem \ref{thm:crs_framework} would imply a $\nicefrac{27}4$-approximation algorithm for  Submodular Block Matching. Instead, in Lemma \ref{lemma:our_crs}, we are going to use the composition of two existing CRSs to obtain a stronger guarantee via the next known lemma.

\begin{lemma}[\citet{BuchbinderF18}]\label{lemma:crs_composition}
For $i\in[2]$, let $\pi_i$ be a monotone $(b, c_i)$-balanced CRS for a down-closed body $P_i$. Then, there is a
monotone $(b, c_1 c_2)$-balanced CRS $\pi$ for $P_1 \cap P_2$, which can be computed efficiently if $\pi_1$ and $\pi_2$ can be computed efficiently.
\end{lemma}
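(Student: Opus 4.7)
The plan is to compose the two schemes sequentially: given an input set $A \subseteq E$, define $\pi_x(A) := \pi^{(2)}_x\bigl(\pi^{(1)}_x(A)\bigr)$, using independent internal randomness for the two calls. Feasibility is immediate: $\pi^{(1)}_x(A) \in P_1$ by the first CRS, hence $\pi_x(A) \subseteq \pi^{(1)}_x(A)$ also lies in $P_1$ because $P_1$ is down-closed, and $\pi_x(A) \in P_2$ by the second CRS; thus $\pi_x(A) \in P_1 \cap P_2$. Polynomial-time computability is immediate from running the two efficient procedures in sequence.

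For monotonicity, I would rely on the (standard) strengthening of each $\pi^{(i)}$ under which, for any fixed realization of its internal randomness, the output is a monotone set function of the input. Coupling $\pi^{(1)}$ and $\pi^{(2)}$ by sharing their respective randomness across the two executions on $A$ and $B$ with $A\subseteq B$, one gets $\pi^{(1)}_x(A) \subseteq \pi^{(1)}_x(B)$, and then for every $e$, the events $e \in \pi_x(B)$ and $e \in \pi^{(1)}_x(A)$ together force $e \in \pi_x(A)$. Averaging over the internal randomness yields the desired marginal inequality $\Pr[e \in \pi_x(A)] \geq \Pr[e \in \pi_x(B)]$.

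The main work is in establishing balancedness, $\Pr[e \in \pi_x(R)] \geq c_1 c_2\, x_e$, where $R := \mathtt{R}(x)$ and $S_1 := \pi^{(1)}_x(R)$. Coupling the two (hypothetical) calls of $\pi^{(2)}_x$ on the nested inputs $S_1 \subseteq R$ to share internal randomness, the pointwise monotonicity of $\pi^{(2)}$ gives $\{e \in \pi^{(2)}_x(R),\, e \in S_1\} \subseteq \{e \in \pi^{(2)}_x(S_1)\}$, so
\[
\Pr[e \in \pi_x(R)] \;\geq\; \Pr\bigl[e \in \pi^{(2)}_x(R),\; e \in S_1\bigr].
\]
Since the internal randomness of $\pi^{(1)}$ and of $\pi^{(2)}$ is independent of each other and of $R$, the right-hand side equals $\EX_R\bigl[f(R)\, g(R)\bigr]$, where $f(R) := \Pr[e \in \pi^{(2)}_x(R) \mid R]$ and $g(R) := \Pr[e \in S_1 \mid R]$ are both non-increasing functions of $R$ (viewed as a subset of $E$) by the monotonicity of $\pi^{(2)}$ and $\pi^{(1)}$, respectively. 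Restricting to the conditional product measure of $R$ given $e \in R$ and applying the FKG inequality, $f$ and $g$ are positively correlated, so $\EX_{R \mid e \in R}[f \cdot g] \geq \EX_{R \mid e \in R}[f] \cdot \EX_{R \mid e \in R}[g] \geq c_2 \cdot c_1$ by the individual balancedness guarantees; multiplying by $\Pr[e \in R] = x_e$ gives $c_1 c_2\, x_e$, as required. The hard part of the argument is exactly this balancedness step: one needs the strong (pointwise/coupling) form of monotonicity to turn the inclusion $S_1 \subseteq R$ into a pointwise comparison of the $\pi^{(2)}$ outputs, and an FKG-style inequality to decouple the two sources of randomness without losing a constant factor.
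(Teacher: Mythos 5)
Your argument is correct, but note that the paper itself does not prove this lemma at all---it is imported verbatim from \citet{BuchbinderF18} (and ultimately \citet{ChekuriVZ14})---so there is no in-paper proof to compare against; what you have reconstructed is essentially the canonical proof from that literature: sequential composition $\pi_x = \pi^{(2)}_x \circ \pi^{(1)}_x$, feasibility via down-closedness of $P_1$, and balancedness via the Harris/FKG inequality for the two non-increasing functions $f(R)=\Pr[e\in\pi^{(2)}_x(R)\mid R]$ and $g(R)=\Pr[e\in\pi^{(1)}_x(R)\mid R]$ under the product measure of $\mathtt{R}(x)$ conditioned on $e\in \mathtt{R}(x)$. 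Two small remarks: for the balancedness step you do not actually need the pointwise (shared-randomness) strengthening of monotonicity---after conditioning on $\mathtt{R}(x)$ and on the internal randomness of $\pi^{(1)}$, the set $S_1$ is a \emph{fixed} subset of $\mathtt{R}(x)$, so the bare marginal monotonicity of $\pi^{(2)}$ already yields $\Pr[e\in\pi^{(2)}_x(S_1)]\ge \Pr[e\in S_1]\cdot f(\mathtt{R}(x))$ pointwise in that conditioning. By contrast, you are right that the monotonicity of the \emph{composed} scheme is the one place where the marginal definition alone is insufficient and some strengthened (pointwise or characterization-based) form of monotonicity of the constituent schemes must be invoked; this is exactly how the cited references handle it, and the concrete schemes composed in Lemma \ref{lemma:our_crs} satisfy it.
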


\begin{lemma}\label{lemma:our_crs}
    There is a monotone $\big(b, e^{-b}(1-e^{-b})/b\big)$-balanced CRS for the convex hull $P$ of all matchings of Submodular Block Matching and it can be computed efficiently.
\end{lemma}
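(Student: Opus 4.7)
The plan is to express $P$ as the intersection of two simpler polytopes, each admitting its own balanced CRS, and then invoke the composition lemma (Lemma~\ref{lemma:crs_composition}). Writing $x_{ik}$ for the coordinate of the hyperedge $e(i,k)$, we have $P = P_1 \cap P_2$ with
\begin{align*}
P_1 &= \Big\{x \in [0,1]^E \,:\, \textstyle\sum_{j \in [H]} x_{ij} \le 1 \text{ for all } i \in [n]\Big\}, \\
P_2 &= \Big\{x \in [0,1]^E \,:\, \textstyle\sum_{i \in [n]} \sum_{k = j}^{j+p_i} x_{ik} \le 1 \text{ for all } j \in [H]\Big\}.
\end{align*}
Here $P_1$ captures that each box $b_i \in L$ is incident to at most one chosen hyperedge and $P_2$ captures that each timeslot $t_j \in R$ is covered by at most one chosen hyperedge.

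The first ingredient is standard: $P_1$ is the polytope of a partition matroid on $E$, with blocks $\{e(i,j) : j \in [H]\}$ for each $i \in [n]$. For any matroid polytope, the classical CRS of Chekuri, Vondr\'ak, and Zenklusen yields a monotone $(b, (1 - e^{-b})/b)$-balanced CRS that can be computed in polynomial time; call it $\pi^{(1)}$.

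The second, and more interesting, ingredient is to construct a monotone $(b, e^{-b})$-balanced CRS $\pi^{(2)}$ for $P_2$. I would use a greedy scheme that leverages the block-interval structure of $\mathcal{H}(\Instance)$: process the hyperedges of $\mathtt{R}(x)$ in a carefully chosen order (for instance, by increasing leftmost $R$-vertex index, possibly combined with a random priority) and include each hyperedge iff it does not conflict with any previously included one, where $e(i,k)$ and $e(i',k')$ conflict iff their $R$-intervals $[k, k + p_i]$ and $[k', k' + p_{i'}]$ overlap. The main obstacle is lower-bounding the acceptance probability of a fixed hyperedge $e(i,k)$ by $e^{-b} \cdot x_{ik}$. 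A naive union bound over the $p_i + 1$ $R$-vertices of $e(i,k)$ would only control the relevant conflict mass up to $(p_i + 1)\,b$, leading to a factor that degrades exponentially with $p_i$ and is useless in the worst case. To avoid this, one must charge each potentially conflicting hyperedge to a single canonical $R$-vertex of $e(i,k)$ (e.g.\ the leftmost $R$-vertex of $e(i,k)$ that it contains); the consecutive-block structure guarantees that, after charging, the total relevant conflict mass is at most $b$, independent of $p_i$, from which the $e^{-b}$ factor follows by a standard product estimate.

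Composing $\pi^{(1)}$ and $\pi^{(2)}$ via Lemma~\ref{lemma:crs_composition} (with $c_1 = (1 - e^{-b})/b$ and $c_2 = e^{-b}$) immediately produces a monotone $\big(b, e^{-b}(1 - e^{-b})/b\big)$-balanced CRS for $P_1 \cap P_2 = P$, and since both component CRSs admit polynomial-time implementations, so does the composed one.
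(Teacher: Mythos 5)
Your overall architecture is exactly the paper's: write $P=P_1\cap P_2$, use the matroid CRS of \citet{ChekuriVZ14} for the partition-matroid polytope $P_1$, find a monotone $(b,e^{-b})$-balanced CRS for $P_2$, and compose via \Cref{lemma:crs_composition}. The gap is entirely in your treatment of $P_2$, and it is twofold. First, monotonicity: a greedy that processes the sampled hyperedges in a fixed order (say by leftmost $R$-vertex) and accepts a hyperedge iff it does not conflict with a previously \emph{accepted} one is in general not a monotone CRS. With intervals $[1,2],[2,3],[3,4]$, the scheme accepts $[3,4]$ when all three are sampled (since $[2,3]$ is blocked by $[1,2]$) but rejects $[3,4]$ when only $[2,3],[3,4]$ are sampled, so $\Pr[e\in\pi_x(A)]\ge\Pr[e\in\pi_x(B)]$ fails for $A\subseteq B$. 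Monotonicity is not cosmetic here: it is required both by \Cref{lemma:crs_composition} and by \Cref{thm:crs_framework}, so you would have to either randomize/modify the scheme and re-prove this property, or use a different acceptance rule altogether.

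Second, the balance constant. Your charging observation is correct and is the right structural fact (any hyperedge that can block $e(i,k)$ under a left-to-right order must cover the single slot $k$, whose fractional load is at most $b$), but the ``standard product estimate'' it feeds into gives $\prod_{J\ni k, J\neq e(i,k)}(1-x_J)\ \ge\ 1-\sum_{J\ni k}x_J\ \ge\ 1-b$, and this is tight (one blocker of mass $b$ that is never itself blocked); note $1-b<e^{-b}$, and the inequality $\prod_J(1-x_J)\ge e^{-b}$ is simply false in general — the exponential bound goes the other way. So your sketch yields at best a monotone $(b,1-b)$-balanced CRS for $P_2$, which composed with $c_1=(1-e^{-b})/b$ does not give the constant claimed in the statement (nor, downstream, the $5.32$ factor of \Cref{cor:block_matching}). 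The paper avoids constructing this CRS by hand: it observes that, restricted to $R$, the hyperedges are blocks of consecutive slots, so $P_2$ is precisely the polytope of the Submodular Independent Set in Interval Graphs problem, and it invokes the explicit monotone $(b,e^{-b})$-balanced CRS of \citet{Feldman13} for that problem. To make your route work you would either have to reproduce that stronger scheme and its analysis (a sample-and-product bound does not suffice), or settle for the weaker $1-b$ balance and a correspondingly worse final constant.
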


\begin{proof}
We will express $P$ as the intersection of two down-closed polytopes $P_1$ and $P_2$ for which strong CRSs exist. Then we are going to compose those using Lemma \ref{lemma:crs_composition}. The two polytopes correspond to the two distinct types of inequalities that define $P$; thus, it is straightforward that they too are down-closed. That is 
\[P_1 = \Big\{ x\in [0, 1]^{|E|} \,\big|\,  \sum_{j\in[H]} x_{ij} \le 1, \text{ for all } i\in [n] \Big\}\]
and
\[P_2 = \Big\{ x\in [0, 1]^{|E|} \,\big|\,  \sum_{i\in[n]} \sum_{k = j}^{j+p_i} x_{ik} \le 1, \text{ for all } j\in [H] \Big\}\,.\]
Now, it is not hard to see that $P_1$ is the convex hull of all the (characteristic vectors of) independent sets of a very simple partition matroid where the $i$-th block is the singleton $\{b_i\}$ (i.e., the $i$-th vertex on the ``left'' side $L$) and the corresponding capacity is $1$. \citet{ChekuriVZ14} have shown that for the convex hull of the characteristic vectors of the independent sets of any matroid there is a monotone $\big(b, (1-e^{-b})/b\big)$-balanced CRS, so this is also the case for $P_1$ that can be computed efficiently. We call this CRS $\pi_1$. 

For $P_2$ it is not as straightforward to identify its inequalities with the constraints of a problem that is known to have a CRS. Note, however, that the hyperedges in our case have a very special structure. Each hyperedge, when restricted on the ``right'' side $R$, only contains consecutive vertices. That is, when one only looks at the restriction of the hyperedges on $R$ and the matching constraint, this can be interpreted as an instance where we are given a set of intervals on the discrete number line, and a subset $S$ of those is feasible if no two intervals in $S$ intersect. That is, $P_2$ can be seen as the convex hull of the characteristic vectors of the feasible sets of the \emph{Submodular Independent Set in Interval Graphs} problem, for which \citet{Feldman13} explicitly provided a monotone $\big(b, e^{-b}\big)$-balanced CRS; we call this CRS $\pi_2$. Since $P = P_1 \cap P_2$, applying Lemma \ref{lemma:crs_composition} completes the proof.
\end{proof}

Given our discussion above about the density of $P$, an immediate consequence of Theorem \ref{thm:crs_framework} and Lemma \ref{lemma:our_crs} is Theorem~\ref{cor:block_matching}, restated below. The exact factor is obtained by setting $b = 0.5227$.

\block*

\section{Approximating Stochastic Max Bipartite Matching}\label{app:PTAS}

We first need to define the \emph{Exact Bipartite Matching} problem: Given a bipartite graph $(U, V, E)$ with integral weights on its edges, i.e., each $e\in E$ has a weight $w_e \in \mathbb{N}$, and an integer $W$, return a matching $M$ of weight exactly $W$, i.e., $\sum_{e\in M} w_e = W$.

We call the variant that asks for a \emph{perfect} matching instead, the \emph{Exact Bipartite Perfect Matching} problem. It is known that Exact Bipartite Perfect Matching admits a randomized pseudopolynomial algorithm (as a special case of the XPB and XIB problems of \citet{CameriniGM92}), and it is not hard to see that this extends to Exact Bipartite Matching as well. 
 
\begin{theorem}[\citet{CameriniGM92}]\label{thm:rppPerfectMatching}
There is a randomized pseudopolynomial algorithm for the Exact Bipartite Perfect Matching problem.
\end{theorem}

\begin{corollary}\label{cor:rppMatching}
There is a randomized pseudopolynomial algorithm for the Exact Bipartite Matching problem.
\end{corollary}

\begin{proof}
Let $G$ be a bipartite graph $(U, V, E)$ such that each $e\in E$ has a weight $w_e \in \mathbb{N}$, and $W$ be a given integer. It is easy to construct in polynomial time a new bipartite graph $G'$, such that $G$ has a matching of weight exactly $W$ if and only if $G'$ has a perfect matching of weight exactly $W$.

Without loss of generality, assume that $|U| = n_1 \le n_2 = |V|$. We define the following sets of vertices: $A = \{ a_1, \ldots, a_{n_2 - n_1} \}$ (assuming $n_2 - n_1$, otherwise $A = \emptyset$), $B = \{ b_1, \ldots, b_{n_1} \}$, and $C = \{ c_1, \ldots, c_{n_1} \}$. Then $G'$ is the complete bipartite graph on $U \cup A \cup B$ and $V \cup C$. Any ``old'' edge $e$, i.e., any edge of the form $e = \{u, v\}$ such that $u \in U$, $v \in V$ and $\{u, v\} \in E$, has a weight $w_e$; any other edge has a weight of $0$.

Clearly, if $M$ is a matching of weight $W$ in $G$ then this can be extended to a matching $M'$ of weight $W$ in $G'$ just by adding edges of zero weight, and conversely, if $M'$ is a matching of weight $W$ in $G'$ then we can drop any edges of zero weight and the matching $M$ that consists of the corresponding edges of what remains is a mathching of weight $W$ in $G$.

So, given $G$, one can construct the graph $G'$ as above, run the randomized pseudopolynomial algorithm of Theorem \ref{thm:rppPerfectMatching} for $G'$ to get a matching $M'$ (or a `NO' answer), and turn this output to a matching $M$ for $G$ as above (or return a `NO' answer, respectively).
\end{proof}

Above, we defined the exact version of Bipartite Matching, but we essentially need to talk about the stochastic version of the problem where the objective is to maximize the expected maximum weight (rather than the sum of weights). More generally, suppose that for a problem $\mathcal{P}$ the goal is to select a subset of a ground set $[n]$, each element $i$ of which has an integral weight $w_i$, according to a given feasibility constraint $\mathcal{F}\subseteq 2^{[n]}$, so as to maximize a function of the weights of the selected subset.
Then $\textsc{exact-sum-}\mathcal{P}$ will denote the version of the problem where an additional integer $W$ is given, and one needs to select a feasible subset $S\subseteq [n]$, such that $\sum_{i\in S} w_i = W$, i.e., a set of total weight \emph{exactly} $W$. Now consider the stochastic variant of this setting, where each element $i$ of the ground set $[n]$ is the index of a distribution $X_i$, according to which the weight $w_i$ will be sampled. Let $\textsc{stochastic-max-}\mathcal{P}$ denote the version of the problem, where one needs to select a feasible subset $S\subseteq [n]$, such that $\EX[\max_{i\in S} w_i]$ is maximized. In the Bipartite Matching case, the latter would be the problem of finding a bipartite matching so that the expectation of the heaviest edge is maximized.

As mentioned by \citet{ChenHLLLL16}, the next theorem follows from repeating their Appendix C.2 for $\textsc{stochastic-max-}\mathcal{P}$ (rather than the $K$-MAX problem) and applying the idea of encoding the signature vector with a single integer, as in the proofs of Theorem 1 of \citet{LiD19} and Theorem 1.1 of \citet{LiY13}. 

\begin{theorem}[follows from \citet{ChenHLLLL16} and \citet{LiD19}]\label{thm:ptas}
Assume there exists a (randomized) pseudo-polynomial time exact algorithm for $\textsc{exact-sum-}\mathcal{P}$ and a polynomial time $O(1)$-approximation algorithm for $\textsc{stochastic-max-}\mathcal{P}$. Then, for any $\varepsilon > 0$,
there is a polynomial time $(1+\varepsilon)$-approximation algorithm for $\textsc{stochastic-max-}\mathcal{P}$.
\end{theorem}

Let $\Instance$ be an instance of Pandora Over Time With Instant Inspection. The main observation here is that now $\mathcal{H}(\Instance)$  is not a hypergraph anymore, but a bipartite graph. That is, in this case, Submodular Block Matching becomes the much better understood Submodular Matching on Bipartite Graphs problem with the objective of Lemma \ref{lemma:f-monotone-submodular}; note that this is the same as the \emph{Stochastic-Max-Bipartite Matching} problem.

\begin{theorem}[\citet{LeeSV10}]\label{thm:LeeSviridenko}
    For any fixed $\delta >0$, a polynomial time $(2+\delta)$-approximation algorithm exists for Submodular Matching on Bipartite Graphs.
\end{theorem}

Now Theorem \ref{thm:ptas} combined with Corollary \ref{cor:rppMatching} and Theorem \ref{thm:LeeSviridenko}, gives us Lemma \ref{lemma:exp-max-matching}, restated here.

\stochastic*

\bibliographystyle{plainnat}
\bibliography{bibliography}

\end{document}